\newtheorem {theorem}{Theorem}[section]
\newtheorem {proposition}{Proposition}[section]
\newtheorem {example}{Example}[section]
\newcommand{\hrow}[1]{{H_{\mathrm{row}(#1)}}}
\newcommand{\srow}[1]{{S_{\mathrm{row}(#1)}}}
\newcommand{\scol}[1]{{S_{\mathrm{col}(#1)}}}
\newcommand{\col}{{\mathrm{col}}}
\newcommand{\supp}[1]{{\mathrm{supp}(\boldsymbol{#1})}}
\newcommand{\wt}[1]{{\mathrm{wt}(\boldsymbol{#1})}}
\newcommand{\wtminus}[1]{{\mathrm{wt}(-\boldsymbol{#1})}}
\newcommand{\dist}[2]{{\mathrm{d}(\boldsymbol{#1}, \boldsymbol{#2})}}
\newcommand{\mindist}[1]{{\mathrm{d}(#1)}}
\newcommand{\cdist}[1]{{\mathrm{d}(\boldsymbol{#1}, C)}}
\newcommand{\decomp}{{C = \bigcup_{j = 1}^s (\boldsymbol{d}_j + D_C)}}
\newcommand{\decomparg}[1]{{C = \bigcup_{j = 1}^#1 (\boldsymbol{d}_j + D_C)}}
\newcommand{\sdist}{{S^{\mathrm{diff}}}}
\newcommand{\srowelem}[1]{{S_{\boldsymbol{#1}}}}
\newcommand{\echar}[1]{{\boldsymbol{\varepsilon}_{\boldsymbol{#1}}}}
\newcommand{\echararg}[2]{{\boldsymbol{\varepsilon}_{\boldsymbol{#1}}(\boldsymbol{#2})}}
\newcommand{\echarargindex}[3]{{\boldsymbol{\varepsilon}_{\boldsymbol{#1}}(\boldsymbol{#2}_{#3})}}
\newcommand{\echarargminus}[2]{{\boldsymbol{\varepsilon}_{\boldsymbol{#1}}(-\boldsymbol{#2})}}
\newcommand{\echarargminusindex}[3]{{\boldsymbol{\varepsilon}_{\boldsymbol{#1}}(-\boldsymbol{#2}_{#3})}}
\newcommand{\echarargminuscoset}[3]{{\boldsymbol{\varepsilon}_{\boldsymbol{#1}}(-\boldsymbol{#2}_{#3} + \boldsymbol{d})}}
\newcommand{\echartwo}[2]{{\boldsymbol{\varepsilon}_{(\boldsymbol{#1},\boldsymbol{#2})}}}
\begin{document}
\pagestyle{plain}

\begin{center}
\begin{Huge}
Parity check systems of nonlinear codes over finite commutative Frobenius rings\\
\end{Huge} 
Thomas Westerb\"ack \footnote{Supported by grant KAW 2005.0098 from the Knut and Alice Wallenberg Foundation}, \\
Department of Mathematics, KTH, \\
S-100 44 Stockholm, Sweden\\
thowest@math.kth.se\\
\end{center} 

\begin{abstract}
The concept of parity check matrices of linear binary codes has been extended by Heden \cite{heden08a} to parity check systems of nonlinear binary codes. In the present paper we extend this concept to parity check systems of nonlinear codes over finite commutative Frobenius rings. Using parity check systems, results on how to get some fundamental properties of the codes are given. Moreover, parity check systems and its connection to characters is investigated and a MacWilliams type theorem on the distance distribution is given. 
\end{abstract}

Keywords: nonlinear codes, finite commutative Frobenius rings, parity check systems, characters, discrete Fourier analysis 


\section{Introduction}

In the 1990's Nechaev \cite{nechaev91} and independently Hammons et al. \cite{hammons94} discovered  that several families of nonlinear binary codes with good parameters, regarding the number of codewords and the minimum Hamming distance, can be represented as linear codes over $\mathbb{Z}_4$. Thereafter there has been a revival in the study of codes over finite rings. For some purposes finite Frobenius rings seem to be the most appropriate rings to use for codes over rings, see for example \cite{wood99, wood08}.

In \cite{heden08a}, Heden generalizes the concept of parity check matrices for linear binary codes to parity check systems for nonlinear binary codes. A parity check system $(H | S)$ is a concatenation of two matrices $H$ and $S$, see \eqref{def:parity check system}. In \cite{heden08a} there is a sufficient and necessary condition for a parity check system to correspond to a perfect 1-error correcting binary code. This condition has then been a fruitful approach to the study of perfect binary codes, see for example \cite{heden06, heden08a, hessler06, villanueva09}. By using this condition, the last remaining open case for the rank-kernel problem for binary perfect codes was solved in \cite{heden06}. This problem was given in \cite{etzion98}. Parity check systems have also been used in a software package for Magma, see \cite{villanueva09}, in order to represent and construct nonlinear perfect binary codes in an efficient way. Further, in \cite{heden08b}, parity check systems are defined for perfect codes over finite fields of other cardinalities than $2$, namely for all primes. 

In \cite{villanueva14}, by the use of parity check systems, results on how to efficiently represent, manipulate, store and construct nonlinear binary codes are given. The technique of using parity check systems in \cite{villanueva14} is especially suitable for codes with large kernel. (A code $C$ that consists of $s$ cosets of a subspace has a large kernel when $s << |C|$. For the definition of a kernel see \eqref{eq:def_ker}.)  Also, algorithms on how to compute the minimum distance of the codes and how to decode them are given in \cite{villanueva14}. A comparison of the performance of these algorithms compared to some well-known algorithms and a brute force method for some classes of nonlinear binary codes are given. When the kernel is large enough the algorithms developed in \cite{villanueva14} performs best. 

Two examples of families of good nonlinear binary codes with large kernel are the Preparata and Kerdock codes of length $2^l$ for even $l \geq 4$. A Kerdock code $C$ consists of $2^{l-1}$ cosets of a subspace and $|C| = 2^{2^l}$. A Preparata code $C$ consists of $2^{l-1}$ cosets of a subspace and $|C| = 2^{2^l - 2l}$. For more details on these families of codes see \cite{macwilliams77}.

One of the most fundamental results in coding theory is a theorem due to MacWilliams \cite{macwilliams63}, which relates the weights of a linear code in a finite vector space to the weights of its dual code. There are many generalizations of this result, see for example \cite{britz02, delsarte72, greferath04, macwilliams77, wood99}. By the use of characters, defined in \eqref{eq:character_mult}, parity check matrices and a  MacWilliams type theorem is defined for linear codes over any finite ring in \cite{greferath04}.

In this paper we extend the concept of parity check systems for nonlinear binary codes, introduced in \cite{heden08a}, to parity check systems for codes over finite commutative Frobenius rings. (See the remark in the end of Section \ref{sec:parity check} for more details about parity check systems for codes over finite fields.) The definition of parity check system is given in \eqref{def:parity check system} and the fundamental connections between parity check systems and codes are given in Theorem \ref{theorem:code to parity check system} and Theorem \ref{theorem:parity check system to code}. Considering the Hamming distance, by the use of a parity check system of a code, Theorem \ref{theorem:d(C)} shows how to derive the minimum distance and Theorem \ref{theorem:error-correct} how to error-correct. 

By use of the module isomorphism given in \eqref{eq:module_iso}, for any submodule $D$ of $R^n$, we are able in \eqref{eq:lozenge_perp} to identify the dot product dual $D^\perp$ in \eqref{eq:def_standard_dot_dual} with the character dual $D^\lozenge$ in \eqref{eq:def_character_dual}. By the use of this identification, Theorem \ref{theorem:Fourier parity check system} gives a formula on how a parity check system $(H|S)$ of a code $C$ can be used in order to get the Fourier coefficients $\hat{\delta}_C$, defined in \eqref{eq:fourier transform M} and \eqref{eq:def_fourier_repr_C}. This formula used in Theorem \ref{theorem:distance identity} gives a MacWilliams type of theorem on the distance distribution of the code using a parity check system. Any parity check matrix of a linear code over a finite commutative Frobenius ring, as defined in \cite{greferath04}, corresponds to a parity check system over the code. Using this fact,  the MacWilliams identity  given in \cite{greferath04} corresponds to Theorem \ref{theorem:distance identity} for linear codes over finite commutative Frobenius rings. For more details about these correspondences see the remark after Theorem \ref{theorem:distance identity}. 

The main contributions in this paper are; the extension of the concept of binary parity check systems to parity check systems over finite commutative Frobenius rings, some fundamental results on the connections between codes, parity check systems and characters, and results concerning the minimal distance and distance distribution of codes using parity check systems. The main motive for the present work is to give fundamental results on parity check systems such that they can be used in further research on nonlinear codes over finite commutative Frobenius rings. As described above, earlier works have shown that parity check systems can be used fruitfully in order to do research on nonlinear binary codes. In the line of these earlier works, some interesting areas for future studies on nonlinear codes over commutative Frobenius rings, using parity check systems, are; how to characterize and investigate different classes of codes, and how to efficiently represent, manipulate, store and construct codes with good properties. 

The paper is organized as follows. Section \ref{section:preliminaries} contains some basic facts and notation on finite commutative Frobenius rings $R$, codes over $R$ and parity check systems over $R$.  Section \ref{sec:parity check} deals with the fundamental connection between parity check systems and codes. The first part of Section \ref{section:dfa} describes some basic concepts and facts about Fourier analysis on finite Abelian groups. In the second part some results on Fourier analysis on codes over $R$ are given. Section \ref{sec:distance} mainly deals with how to get the minimal distance and distance distribution of codes by the use of parity check systems, and how parity check systems are connected to characters. 


\section{Preliminaries} \label{section:preliminaries}

We assume that the reader is familiar with standard terminology in ring theory, module theory and coding theory, for more details, see for example \cite{macwilliams77, sharp00}. For further reading on Frobenius rings and the application of these rings to coding theory, see for example \cite{greferath09, greferath04, greferath00, honold01, honold99, wood99, wood08}.

Let $A$ denote a finite commutative (associative) ring with identity ($1_A \neq 0_A$). For any subset $I \subseteq A$ let $I^\perp$ denote the \emph{annihilator} of $I$, i.e. 
$$
I^\perp = \{a \in A : xa = 0 \hbox{ for all } x \in I\}.
$$
An $A$-module structure on $A^n$ is obtained by
$$
\boldsymbol{x} + \boldsymbol{y} = (x_1 + y_1,\ldots,x_n + y_n) \hbox{ and } a \boldsymbol{x} = (ax_1,\ldots,ax_n),
$$ 
for any $a \in A$ and $\boldsymbol{x} = (x_1,\ldots,x_n) \hbox{, } \boldsymbol{y}=(y_1,\ldots,y_n)\in A^n$. A dot product for $\boldsymbol{x}, \boldsymbol{y} \in A^n$ is defined by 
$$
\boldsymbol{x} \cdot \boldsymbol{y} = x_1 y_1 + \ldots + x_n y_n.
$$
For any subset $B \subseteq A^n$, let 
\begin{equation} \label{eq:def_standard_dot_dual}
B^\perp = \{\boldsymbol{y} \in A^n | \boldsymbol{x} \cdot \boldsymbol{y} = 0 \hbox{ for all } \boldsymbol{x} \in B\}.
\end{equation}

There are many equivalent characterizations of Frobenius rings \cite{honold01}. One characterization of finite commutative Frobenius ring is that $A$ is \emph{Frobenius} if and only if
$$
|I| \cdot |I^\perp| = |A| \hbox{ for every ideal } I \hbox{ of } A.
$$
Examples of finite commutative Frobenius rings are finite commutative principal ideal rings and finite direct sums of finite commutative Frobenius rings. Let us also mention that examples of finite commutative principal ideal rings are $\mathbb{F}_q$ (the finite field of cardinality $q$), the ring of integers  $\mathbb{Z}_t$ modulo $t$ , Galois rings, finite commutative chain rings and finite direct sums of finite commutative principal ideal rings, see for example \cite{greferath09}.

Henceforth, let $R$ be a finite commutative Frobenius ring. If $D$ is a submodule of $R^n$, then $D^\perp$ is a submodule of $D$ with the following to properties
\begin{equation} \label{eq:equality_D_Ddualdual}
D^{\perp \perp} = D
\end{equation}
and
\begin{equation} \label{eq:cardinality_Ddual_frobenius}
|D| \cdot | D^\perp | = |R|^n,
\end{equation}
see for example \cite{honold01a}.

Let $(H | S)$ be a concatenation of a $m \times n$-matrix $H = (h_{i,j})$ and a $m \times s$-matrix $S = (s_{i,j})$ over $R$. Moreover, let $\hrow{i}$ denote row $i$ of $H$, $\srow{i}$ denote row $i$ of $S$, $\scol{j}$ denote row $j$ of $S$ and $\col(S)$ denote the family of columns of $S$. The concatenated matrix $(H | S)$ is a \emph{parity check system over $R$} if the following conditions are satisfied,
\begin{equation} \label{def:parity check system}
\begin{array}{rl}
(i)   & s_{i,j} \in \{\hrow{i} \cdot \boldsymbol{x} : \boldsymbol{x} \in R^n \} \hbox{ for } 1 \leq i \leq m \hbox{ and } 1 \leq j \leq s,\\
(ii)  & \hbox{all columns in $S$ are distinct,}\\
(iii) & \sum_{i = 1}^m r_i \hrow{i} =  \sum_{i = 1}^m r'_i \hrow{i} \hbox{ for } r_i, r'_i \in R \Rightarrow \\
      & \sum_{i = 1}^m r_i \srow{i} =  \sum_{i = 1}^m r'_i \srow{i} 
\end{array}
\end{equation}
Henceforth, let $m$, $n$ and $s$ denote the size of a parity check system $(H | S)$, as indicated above. 

\begin{example}
An example of a parity check system $(H|S)$ over $\mathbb{Z}_6$, with $m = 2$, $n= 4$ and $s=3$, is
$$
(H|S) = 
\left (
\begin{array}{cccc|ccc}
1&1&3&5&0&1&5\\
0&4&2&2&0&2&4
\end{array}
\right )
.
$$
Condition (i) in \eqref{def:parity check system} is satisfied as 
$$
\{(1,1,3,5) \cdot \boldsymbol{x} : \boldsymbol{x} \in \mathbb{Z}_6^4\} = \mathbb{Z}_6 \hbox{ and } \{(0,4,2,2) \cdot \boldsymbol{x} : \boldsymbol{x} \in \mathbb{Z}_6^4\} = \{0,2,4\}.
$$
Condition (ii) is satisfied as the columns in $S$, $(0,0)^T$, $(1,2)^T$ and $(5,4)^T$, are distinct. For matrix $H$ we have that 
$$
r_1(1,1,3,5) + r_2 (0,4,2,2) = r'_1(1,1,3,5) + r'_2 (0,4,2,2) \Rightarrow r'_1 = r_1 \hbox{ and } r'_2 = r_2,
$$
for $r_1,r_2,r'_1,r'_2 \in \mathbb{Z}_6$. Consequently, condition (iii) is satisfied by $(H|S)$. 
\end{example}

For $\boldsymbol{x}, \boldsymbol{y} \in R^n$, let $\supp{x}$, $\wt{x}$ and $\dist{x}{y}$ denote the \emph{support} of $\boldsymbol{x}$, \emph{Hamming weight} of $\boldsymbol{x}$ and \emph{Hamming distance} between $\boldsymbol{x}$ and $\boldsymbol{y}$, respectively. That is
\begin{equation} \label{eq:min_supp_weight}
\begin{array}{l}
\supp{x}  = \{i \in [n] : x_i \neq 0\} \hbox{, } \wt{x} = | \mathrm{supp(\boldsymbol{x})} | \hbox{ and } \dist{x}{y} = \mathrm{wt(\boldsymbol{x}-\boldsymbol{y})},
\end{array}
\end{equation}
where $[n] = \{1,\ldots,n\}$. A \emph{code} $C$ over $R$ is a nonempty subset of $R^n$. The elements of $C$ are called \emph{codewords} and a code is \emph{linear} if it is a submodule of $R^n$, otherwise the code is \emph{nonlinear}. For any code $C$ over $R$ let $\mindist{C}$ denote the minimum distance of $C$, i.e. 
\begin{equation} \label{eq:mindist}
\mindist{C} = \min \{ \dist{x}{y} : \boldsymbol{x}, \boldsymbol{y} \in C\ \hbox{ and } \boldsymbol{x} \neq \boldsymbol{y} \}.
\end{equation}
Moreover, for any code $C$ of $R^n$ and element $\boldsymbol{x} \in R^n$, let
$$
\boldsymbol{x} + C = \{\boldsymbol{x} + \boldsymbol{c} : \boldsymbol{c} \in C\}.
$$
The \emph{kernel} of a code $C$ of $R^n$ is the following submodule of $R^n$,
\begin{equation} \label{eq:def_ker}
\mathrm{ker}(C) = \{\boldsymbol{x} \in R^n : r \boldsymbol{x} + C = C \hbox{ for all } r \in R\}.
\end{equation}
A \emph{partial kernel} of $C$ is a submodule of $\mathrm{ker}(C)$. For any partial kernel $D$ of $\mathrm{ker}(C)$, there are elements $\boldsymbol{d}_1,\ldots,\boldsymbol{d}_s \in C$ such that 
\begin{equation} \label{eq:coset decomposition}
C = \bigcup_{j = 1}^s (\boldsymbol{d}_j + D) \quad \hbox{where} \quad (\boldsymbol{d}_i + D) \cap (\boldsymbol{d}_j + D) = \emptyset \hbox{ if } i \neq j.
\end{equation}
Note, that if $D$ is a submodule of $R^n$ such that (\ref{eq:coset decomposition}) holds for some $\boldsymbol{d}_1,\ldots,\boldsymbol{d}_s \in C$, then $D$ is a partial kernel of $C$. A decomposition, as in (\ref{eq:coset decomposition}), is here called a \emph{coset decomposition} of $C$.

For the rest of this paper, we assume that $C$ is a code of $R^n$ with a partial kernel $D_C$ and \emph{coset representatives} $\boldsymbol{d}_1,\ldots,\boldsymbol{d}_s$, i.e. $C$ has a coset decomposition 
$$
\decomp.
$$ 


\section{Parity check systems and codes over $R$} \label{sec:parity check}

The two theorems below give the fundamental connection between parity check systems and codes over $R$. The proofs of the theorems are given in this section after Proposition \ref{proposition:cosets}. Some more basic results and observations on parity check systems and codes over $R$ are given in the end of this section. 

Let $<\boldsymbol{x}_1,\ldots, \boldsymbol{x}_t>$ be the submodule of $R^n$ that is generated by the elements $\boldsymbol{x}_1,\ldots, \boldsymbol{x}_t \in R^n$.

\begin{theorem} \label{theorem:code to parity check system}
To any set of generators $<\boldsymbol{h}_1,\ldots,\boldsymbol{h}_m> = D_C^\perp$ for a code $\decomp$, the concatenated $m \times (n+s)$-matrix
\begin{equation} \label{eq:matrix representation}
\left (
\begin{array}{c|ccc}
\boldsymbol{h}_1 & \boldsymbol{h}_1 \cdot \boldsymbol{d}_1 & \cdots & \boldsymbol{h}_1 \cdot \boldsymbol{d}_s\\
\vdots & \vdots             & \ddots & \vdots\\
\boldsymbol{h}_m & \boldsymbol{h}_m \cdot \boldsymbol{d}_1 & \cdots & \boldsymbol{h}_m \cdot \boldsymbol{d}_s
\end{array}
\right ).
\end{equation}
is a parity check system over $R$.
\end{theorem}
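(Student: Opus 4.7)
The plan is to verify the three defining conditions of a parity check system in \eqref{def:parity check system} one by one, using the hypothesis that the $\boldsymbol{h}_i$ generate $D_C^\perp$ together with the duality identity \eqref{eq:equality_D_Ddualdual} for submodules of $R^n$ over the Frobenius ring $R$.

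For condition (i), I would simply observe that $s_{i,j} = \boldsymbol{h}_i \cdot \boldsymbol{d}_j$ with $\boldsymbol{d}_j \in R^n$, so by construction $s_{i,j} \in \{\boldsymbol{h}_i \cdot \boldsymbol{x} : \boldsymbol{x} \in R^n\}$, and this is immediate.

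For condition (iii), I would evaluate the $j$-th coordinate of $\sum_{i=1}^m r_i \srow{i}$: by the definition of $S$ in \eqref{eq:matrix representation}, this coordinate equals $\sum_{i=1}^m r_i (\boldsymbol{h}_i \cdot \boldsymbol{d}_j) = \bigl(\sum_{i=1}^m r_i \boldsymbol{h}_i\bigr) \cdot \boldsymbol{d}_j$. The analogous identity holds for the $r'_i$. Hence if $\sum r_i \boldsymbol{h}_i = \sum r'_i \boldsymbol{h}_i$, the coordinate-wise equality of $\sum r_i \srow{i}$ and $\sum r'_i \srow{i}$ follows at once.

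The only step requiring a genuine argument is condition (ii). I would argue by contradiction: suppose two columns of $S$ coincide, say column $j$ equals column $k$ with $j \neq k$. Then $\boldsymbol{h}_i \cdot (\boldsymbol{d}_j - \boldsymbol{d}_k) = 0$ for every $i = 1,\ldots,m$. Since the $\boldsymbol{h}_i$ generate $D_C^\perp$ as an $R$-module, this gives $\boldsymbol{x} \cdot (\boldsymbol{d}_j - \boldsymbol{d}_k) = 0$ for every $\boldsymbol{x} \in D_C^\perp$, so $\boldsymbol{d}_j - \boldsymbol{d}_k \in (D_C^\perp)^\perp$. Invoking \eqref{eq:equality_D_Ddualdual}, which is where the Frobenius hypothesis is used, we get $\boldsymbol{d}_j - \boldsymbol{d}_k \in D_C$, so $\boldsymbol{d}_j + D_C = \boldsymbol{d}_k + D_C$, contradicting the disjointness of the cosets in the decomposition $\decomp$.

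The main (and essentially only) conceptual point is thus the use of the double-annihilator identity $D_C^{\perp\perp} = D_C$ in step (ii); the rest is a direct unpacking of the definitions in \eqref{eq:matrix representation}.
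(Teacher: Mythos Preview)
Your proof is correct and follows essentially the same approach as the paper. The only cosmetic difference is that for condition (ii) the paper packages the implication ``equal columns $\Rightarrow$ equal cosets'' into a separate lemma (Proposition~\ref{proposition:cosets}, whose proof is exactly your double-annihilator argument via \eqref{eq:equality_D_Ddualdual}), whereas you inline that argument directly; the content is identical.
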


\begin{example} \label{ex:C_to_PC}
Let $\decomparg{3}$ be a code over $\mathbb{Z}_6$, where 
$$
\begin{array}{l}
\boldsymbol{d}_1 = (0,0,0,0) \hbox{, } \boldsymbol{d}_2 = (5,2,0,0) \hbox{, } \boldsymbol{d}_3 = (4,1,0,0) \hbox{ and }\\
D_C = <(2,1,1,0),(0,1,0,1),(3,0,3,0)>.
\end{array}
$$
Then $D_C^\perp = <\boldsymbol{h}_1, \boldsymbol{h}_2,> = <(1,1,3,5),(0,4,2,2)>$ and
$$
(H|S) = 
\left (
\begin{array}{c|ccc}
\boldsymbol{h}_1 & \boldsymbol{h}_1 \cdot \boldsymbol{d}_1 & \boldsymbol{h}_1 \cdot \boldsymbol{d}_2 & \boldsymbol{h}_1 \cdot \boldsymbol{d}_3\\
\boldsymbol{h}_1 & \boldsymbol{h}_2 \cdot \boldsymbol{d}_1 & \boldsymbol{h}_2 \cdot \boldsymbol{d}_2 & \boldsymbol{h}_2 \cdot \boldsymbol{d}_3
\end{array}
\right ) =
\left (
\begin{array}{cccc|ccc}
1&1&3&5&0&1&5\\
0&4&2&2&0&2&4
\end{array}
\right )
$$
is a parity check system over $\mathbb{Z}_6$.
\end{example}

\begin{theorem} \label{theorem:parity check system to code}
Let $(H|S)$ be a parity check system over $R$. Then there is a unique code $C=$ $\bigcup_{j = 1}^s (\boldsymbol{d}_j + D_C)$ such that 
$$
\begin{array}{rl}
(i) & D_C = <\hrow{1},\ldots,\hrow{m}>^\perp,\\
(ii) & \boldsymbol{d}_j + D_C = \{ \boldsymbol{x} \in R^n : H \boldsymbol{x}^T = \scol{j} \} \hbox{ for } 1 \leq j \leq s.
\end{array}
$$
\end{theorem}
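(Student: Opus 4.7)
The plan is to construct $C$ explicitly from $(H|S)$. Set $D := \langle \hrow{1},\ldots,\hrow{m}\rangle^\perp$, which will serve as $D_C$, and write $M := \langle \hrow{1},\ldots,\hrow{m}\rangle$, so that $M = D^\perp$ by \eqref{eq:equality_D_Ddualdual}. For each $j \in \{1,\ldots,s\}$ I aim to produce a $\boldsymbol{d}_j \in R^n$ satisfying $H\boldsymbol{d}_j^T = \scol{j}$; the family of cosets $\boldsymbol{d}_j + D$ will then yield the desired decomposition of $C$.

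The first step is to package condition (iii) of \eqref{def:parity check system} into an $R$-module homomorphism: for each $j$, let $\alpha_j: M \to R$ send $\sum_i r_i \hrow{i} \mapsto \sum_i r_i s_{i,j}$. This is the $j$-th coordinate of the well-defined map $\sum_i r_i \hrow{i} \mapsto \sum_i r_i \srow{i}$ furnished by (iii), so $\alpha_j$ is a well-defined $R$-linear functional on $M$ with $\alpha_j(\hrow{i}) = s_{i,j}$.

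The main obstacle is to realise each $\alpha_j$ as a dot product with an element of $R^n$, and this is where the Frobenius hypothesis enters. Consider the $R$-linear map
\[
\Phi: R^n/D \longrightarrow \mathrm{Hom}_R(M, R), \qquad \boldsymbol{d} + D \longmapsto (\boldsymbol{y} \mapsto \boldsymbol{y}\cdot\boldsymbol{d}),
\]
which is well-defined and injective because $D = M^\perp$. By \eqref{eq:cardinality_Ddual_frobenius}, $|R^n/D| = |R|^n/|D| = |M|$, and the Frobenius property gives $|\mathrm{Hom}_R(M,R)| = |M|$ (via a generating character pairing, in the spirit of the later sections of the paper). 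Hence $\Phi$ is a bijection, and each $\alpha_j$ is represented by a unique coset $\boldsymbol{d}_j + D$ in $R^n$. This $\boldsymbol{d}_j$ satisfies $\hrow{i}\cdot\boldsymbol{d}_j = s_{i,j}$ for all $i$, i.e.\ $H\boldsymbol{d}_j^T = \scol{j}$, and therefore $\{\boldsymbol{x}\in R^n : H\boldsymbol{x}^T = \scol{j}\} = \boldsymbol{d}_j + D$.

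Finally, set $C := \bigcup_{j=1}^s(\boldsymbol{d}_j + D)$. Condition (ii) of \eqref{def:parity check system}, that the columns $\scol{j}$ are pairwise distinct, forces these cosets to be pairwise disjoint, since they are preimages of distinct elements under the $R$-linear map $\boldsymbol{x} \mapsto H\boldsymbol{x}^T$. Thus $C$ is a coset decomposition in the sense of \eqref{eq:coset decomposition} with partial kernel $D$, satisfying both (i) and (ii). Uniqueness is immediate: (i) forces $D_C = D$, and (ii) forces each coset $\boldsymbol{d}_j + D_C$ to equal the preimage of $\scol{j}$, so $C$ is uniquely determined by $(H|S)$.
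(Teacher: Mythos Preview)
Your proof is correct and takes a genuinely different, more structural route than the paper. The paper establishes non-emptiness of each solution set $\{\boldsymbol{x}: H\boldsymbol{x}^T = \scol{j}\}$ by a combinatorial counting argument: it lets $S_H$ be the set of column vectors satisfying conditions (i) and (iii) of \eqref{def:parity check system}, observes that both $\col(S)$ and the image $\col(S_D) = \{H\boldsymbol{x}^T : \boldsymbol{x}\in R^n\}$ lie inside $S_H$, and then proves $|S_H| \le |\col(S_D)|$ by induction on $m$, using at each step the second isomorphism theorem for modules together with \eqref{eq:cardinality_Ddual_frobenius}. You instead read condition (iii) as saying that each $\alpha_j: M \to R$ is a well-defined $R$-linear functional, and then invoke Frobenius duality to realise it as a dot product via the bijection $\Phi$. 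This is shorter and more conceptual; it also shows, as a byproduct, that over a Frobenius ring condition (i) is actually redundant (it follows from (iii)), whereas the paper's induction genuinely uses (i) both in the base case and in bounding the sets $A_{\boldsymbol{x}}$. The one step you should make explicit is the equality $|\mathrm{Hom}_R(M,R)| = |M|$: it follows from the $R$-module isomorphism $R \cong \widehat{R}$ (existence of a generating character, Section~\ref{subsec:DFA on R}) combined with the natural isomorphism $\mathrm{Hom}_R(M,\widehat{R}) \cong \widehat{M}$, $f \mapsto (m \mapsto f(m)(1_R))$, which gives $|\mathrm{Hom}_R(M,R)| = |\widehat{M}| = |M|$. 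This is standard for finite Frobenius rings (see e.g.\ \cite{wood99}) but is not stated in the paper prior to Section~\ref{sec:parity check}, so it deserves a line of justification rather than the phrase ``in the spirit of the later sections''.
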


\begin{example}
Let $(H|S)$ be the following parity check system over $\mathbb{Z}_6$,
$$
(H|S) = 
\left (
\begin{array}{cccc|ccc}
1&1&3&5&0&1&5\\
0&4&2&2&0&2&4
\end{array}
\right ).
$$
Then $\decomparg{3}$ is the code over $\mathbb{Z}_6$ with 
$$
\begin{array}{l}
\boldsymbol{d}_1 + D_C = \{\boldsymbol{x} \in \mathbb{Z}_6^4: H \boldsymbol{x}^T = (0,0)^T\} = D_C,\\
\boldsymbol{d}_2 + D_C = \{\boldsymbol{x} \in \mathbb{Z}_6^4: H \boldsymbol{x}^T = (1,2)^T\} = (5,2,0,0) + D_C,\\
\boldsymbol{d}_3 + D_C = \{\boldsymbol{x} \in \mathbb{Z}_6^4: H \boldsymbol{x}^T = (5,4)^T\} = (4,1,0,0) + D_C,
\end{array}
$$
where $D_C = <(2,1,1,0),(0,1,0,1),(3,0,3,0)> = <(1,1,3,5), (0,4,2,2)>^\perp$.
\end{example}

We will say that a parity check system associated to a code $C$, as described in Theorem \ref{theorem:code to parity check system} and Theorem \ref{theorem:parity check system to code} above, is a \emph{parity check system of $C$}. Note, that every parity check system uniquely represent a code $C$, but a code $C$ can be represented by many different parity check systems. 

For a parity check system $(H |S)$ and its associated code $\decomp$, 
$$
\boldsymbol{x} \in C \iff H \cdot \boldsymbol{x}^T \in \col(S)
$$
for $\boldsymbol{x} \in R^n$. Hence, the complexity of checking if an element in $R^n$ is an element in the code or not are proportional to the number of columns in $S$. That is, we may have to check all the $s$ columns in $S$ where $s = \frac{|C|}{|D_C|}$. Thus, to represent a nonlinear code with a parity check system is most efficient when $D_C$ is large.  The representation of a code via a parity check system can be very inefficient when $|D_C| << |C|$. For example, if $|D_C| = 1$ then $s=|C|$.

The following proposition will be used in the proofs of Theorem \ref{theorem:code to parity check system} and Theorem \ref{theorem:parity check system to code}.

\begin{proposition} \label{proposition:cosets}
If $D$ is a submodule of $R^n$ and $<\boldsymbol{h}_1,\ldots,\boldsymbol{h}_m> = D^\perp$, then for any $\boldsymbol{x},\boldsymbol{y} \in R^n$
$$
\boldsymbol{x} + D = \boldsymbol{y} + D \quad \iff \quad 
\left (
\begin{array}{c}
\boldsymbol{h}_1 \cdot \boldsymbol{x}\\
\vdots\\
\boldsymbol{h}_m \cdot \boldsymbol{x}\\ 
\end{array}
\right )
 = 
\left (
\begin{array}{c}
\boldsymbol{h}_1 \cdot \boldsymbol{y}\\
\vdots\\
\boldsymbol{h}_m \cdot \boldsymbol{y}\\ 
\end{array}
\right ).
$$
\end{proposition}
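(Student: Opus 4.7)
The plan is to prove the two implications separately, with the forward direction being immediate and the backward direction relying crucially on the double annihilator property \eqref{eq:equality_D_Ddualdual} that holds because $R$ is Frobenius.

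For the forward direction, I would start from $\boldsymbol{x} + D = \boldsymbol{y} + D$, which is equivalent to $\boldsymbol{x} - \boldsymbol{y} \in D$. Since each $\boldsymbol{h}_i$ lies in $D^\perp$, the definition of the annihilator/orthogonal \eqref{eq:def_standard_dot_dual} gives $\boldsymbol{h}_i \cdot (\boldsymbol{x} - \boldsymbol{y}) = 0$, hence $\boldsymbol{h}_i \cdot \boldsymbol{x} = \boldsymbol{h}_i \cdot \boldsymbol{y}$ for every $i$. Bilinearity of the dot product makes this essentially a one-line computation.

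For the backward direction, suppose $\boldsymbol{h}_i \cdot \boldsymbol{x} = \boldsymbol{h}_i \cdot \boldsymbol{y}$ for all $i \in \{1,\ldots,m\}$, equivalently $\boldsymbol{h}_i \cdot (\boldsymbol{x} - \boldsymbol{y}) = 0$. I would then argue that this forces $\boldsymbol{h} \cdot (\boldsymbol{x} - \boldsymbol{y}) = 0$ for \emph{every} $\boldsymbol{h} \in D^\perp$: any such $\boldsymbol{h}$ can be written as $\boldsymbol{h} = \sum_{i=1}^m r_i \boldsymbol{h}_i$ for some $r_i \in R$ since the $\boldsymbol{h}_i$ generate $D^\perp$, and then $\boldsymbol{h} \cdot (\boldsymbol{x} - \boldsymbol{y}) = \sum r_i (\boldsymbol{h}_i \cdot (\boldsymbol{x} - \boldsymbol{y})) = 0$. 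So $\boldsymbol{x} - \boldsymbol{y} \in (D^\perp)^\perp = D^{\perp\perp}$.

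The main (though standard) obstacle is closing the loop: concluding $\boldsymbol{x} - \boldsymbol{y} \in D$ from $\boldsymbol{x} - \boldsymbol{y} \in D^{\perp\perp}$. This is exactly where the Frobenius hypothesis is used, via \eqref{eq:equality_D_Ddualdual}, which gives $D^{\perp\perp} = D$ for every submodule $D$ of $R^n$. Once this is invoked, we have $\boldsymbol{x} - \boldsymbol{y} \in D$, i.e.\ $\boldsymbol{x} + D = \boldsymbol{y} + D$, completing the proof. I would make a short remark that without the Frobenius hypothesis only the containment $D \subseteq D^{\perp\perp}$ is guaranteed in general, which would give the implication in the wrong direction.
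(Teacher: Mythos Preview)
Your proposal is correct and follows essentially the same approach as the paper: both arguments hinge on the chain $\boldsymbol{x}+D=\boldsymbol{y}+D \iff \boldsymbol{x}-\boldsymbol{y}\in D = D^{\perp\perp} \iff \boldsymbol{h}\cdot(\boldsymbol{x}-\boldsymbol{y})=0$ for all $\boldsymbol{h}\in D^\perp \iff \boldsymbol{h}_i\cdot(\boldsymbol{x}-\boldsymbol{y})=0$ for all $i$, with the Frobenius property \eqref{eq:equality_D_Ddualdual} supplying the key equality $D=D^{\perp\perp}$. The only cosmetic difference is that the paper writes this as one chain of biconditionals, whereas you separate the two implications; the content is identical.
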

\begin{proof}
Using \eqref{eq:equality_D_Ddualdual}, we have
$$
\begin{array}{rcl}
\boldsymbol{x} +D = \boldsymbol{y} + D	 & \iff &  \boldsymbol{x}-\boldsymbol{y} \in D = D^{\perp \perp}\\
                                         & \iff & \boldsymbol{h} \cdot \boldsymbol{x}  = \boldsymbol{h} \cdot \boldsymbol{y}  \hbox{ for all } \boldsymbol{h} \in D^\perp\\
																				 & \iff & \boldsymbol{h}_i \cdot \boldsymbol{x}  = \boldsymbol{h}_i \cdot \boldsymbol{y}  \hbox{ for } 1 \leq i \leq m. 
\end{array}
$$
\end{proof}

\begin{proof}[Proof of Theorem \ref{theorem:code to parity check system}]
We will have to show that the concatenated matrix constructed in \eqref{eq:matrix representation} satisfy the three conditions given in \eqref{def:parity check system}. That condition (i) is satisfied follows directly from the construction. From Proposition \ref{proposition:cosets} and the fact that $\boldsymbol{d}_i + D \neq \boldsymbol{d}_j + D$ when $i \neq j$, we have that all the columns are distinct in the right part of the concatenated matrix. This shows that condition (ii) is satisfied. It is straightforward to prove that if $\boldsymbol{h} = \sum_{i = 1}^m r_i \boldsymbol{h}_i = \sum_{i = 1}^m r'_i \boldsymbol{h}_i$ for $r_i,r'_i \in R$, then for any coset representative $\boldsymbol{d}_j$
\begin{equation} \label{eq:(iii)_true}
\sum_{i = 1}^m r_i (\boldsymbol{h}_i \cdot \boldsymbol{d}_j) = \sum_{i = 1}^m (r_i \boldsymbol{h}_i) \cdot \boldsymbol{d}_j = \boldsymbol{h} \cdot \boldsymbol{d}_j = \sum_{i = 1}^m (r'_i \boldsymbol{h}_i) \cdot \boldsymbol{d}_j= \sum_{i = 1}^m r'_i (\boldsymbol{h}_i \cdot \boldsymbol{d}_j).
\end{equation}
This implies that condition (iii) also is satisfied.
\end{proof}

\begin{proof}[Proof of Theorem \ref{theorem:parity check system to code}]

For convenience, let $\boldsymbol{h}_i = \hrow{i}$ and set $D = <\boldsymbol{h}_1,\ldots,\boldsymbol{h}_m>^\perp$. We want to define cosets $\boldsymbol{d}_j + D$ by
\begin{equation} \label{eq:wanted_cosets}
\boldsymbol{d}_j + D =
\{ \boldsymbol{x} \in R^n : H \boldsymbol{x}^T = \scol{j} \},
\end{equation}
but we will need to establish that the solution set on the right side is non-empty. Once non-emptiness is established, the rest of the theorem follows easily. Indeed, by Proposition \ref{proposition:cosets}, the cosets are well-defined and disjoint, because the columns of $S$ are distinct. By setting $C = \bigcup_{j = 1}^s (\boldsymbol{d}_j + D)$, it is clear that $C$ has the coset decomposition claimed.

Let $\col(S_D)$ be the set of columns representing all the cosets of $D$ in $R^n$, that is
$$
\col(S_D) = \{ H \boldsymbol{x}^T : \boldsymbol{x} \in R^n\}.
$$
To establish the non-emptiness of the solution set of \eqref{eq:wanted_cosets} it suffices to show that $\col(S) \subseteq \col(S_D)$.

By the definition of parity check systems, every element of $\col(S)$ satisfies (i) and (iii) in \eqref{def:parity check system}. Thus $\col(S) \subseteq S_H$,  where $S_H$ is the set of column vectors that satisfy (i) and (iii) in \eqref{def:parity check system} for the matrix $H$. By inspection and with similar arguments as given in \eqref{eq:(iii)_true}, every element of $\col(S_D)$ satisfies (i) and (iii) in \eqref{def:parity check system} for $H$. Hence, in order to show that $\col(S) \subseteq \col(S_D)$, it suffices to show that $|S_H| \leq |\col(S_D)|$.

We will use induction on $m$ to prove that $|S_H| \leq |\col(S_D)|$. Suppose that $m=1$, then $H$ consists of one row and
$$
\begin{array}{rcl}
S_H & \subseteq & \{ y \in R : y \hbox{ satisfies (i) in \eqref{def:parity check system}} \}\\
    & = & \{\boldsymbol{h}_1 \cdot \boldsymbol{x} : \boldsymbol{x} \in R^n\}\\
		& = & \col(S_D).
\end{array}
$$
Assume that $|S_{H''}| \leq |\col(S_{D''})|$ holds for any $(m-1) \times n$-matrix $H''$ and module $D'' = <H''_{\mathrm{row}(1)},\ldots,H''_{\mathrm{row}(m-1)}>^\perp$, where  $S_{H''}$ and $\col(S_{D''})$ denote the analogue of $S_H$ and $\col(S_D)$.

Now, let $H'$ be the $(m-1) \times n-$matrix with $H'_{\mathrm{row}(i)} = \boldsymbol{h}_i$ and $D' = <\boldsymbol{h}_1,\ldots,\boldsymbol{h}_{m-1}>^\perp$. Hence, by \eqref{eq:equality_D_Ddualdual} and \eqref{eq:cardinality_Ddual_frobenius}, 
\begin{equation} \label{eq:|S_D|}
\begin{array}{rclcl}
|\col(S_D)| & = & |\col(S_{D'})| \cdot \frac{|\col(S_D)|}{|\col(S_{D'})|} & = & |\col(S_{D'}) | \cdot \frac{|R^n| / |D|}{|R^n| / |D'|} \\
            & = & |\col(S_{D'})| \cdot \frac{|D^\perp|}{|D'^\perp|} & = & |\col(S_{D'})| \cdot \frac{|<\boldsymbol{h}_1,\ldots,\boldsymbol{h}_m>|}{|<\boldsymbol{h}_1,\ldots,\boldsymbol{h}_{m-1}>|}.
\end{array}
\end{equation}

Observe, by the definitions of $H'$ and $H$, that 
$$
(x_1,\ldots,x_{m-1}, y)^T \in S_H \Rightarrow (x_1,\ldots,x_{m-1})^T \in S_{H'}.
$$
Choose $\boldsymbol{x}^ T = (x_1,\ldots,x_{m-1})^T \in S_{H'}$ and let
$$
A_{\boldsymbol{x}} = \{y \in R: (x_1,\ldots,x_{m-1}, y)^T \in S_{H} \}.
$$
 By (i) in \eqref{def:parity check system}, $A_{\boldsymbol{x}} \subseteq \{\boldsymbol{h}_m \cdot \boldsymbol{y} : \boldsymbol{y} \in R^n\}$. Let $D_{\cap} = (<\boldsymbol{h}_m> \cap <\boldsymbol{h}_1,\ldots,\boldsymbol{h}_{m-1}>)^\perp$. Condition (iii) in \eqref{def:parity check system} implies that
\begin{equation} \label{eq:in_Dcap}
r y = \sum_{i=1}^{m-1} r r_i x_i
\end{equation}
for $y \in A_{\boldsymbol{x}}$ and $\boldsymbol{z} = r \boldsymbol{h}_m = \sum_{i=1}^{m-1} r_i \boldsymbol{h}_i \in <\boldsymbol{h}_m> \cap <\boldsymbol{h}_1,\ldots,\boldsymbol{h}_{m-1}>$.  Consequently, by Proposition \ref{proposition:cosets}, 
$$
A_{\boldsymbol{x}} \subseteq \{\boldsymbol{h}_m \cdot \boldsymbol{y} : \boldsymbol{y} \in B_{\boldsymbol{x}}\},
$$
where $B_{\boldsymbol{x}}$ is the coset of $D_{\cap}$ such that \eqref{eq:in_Dcap} is satisfied. 

Let $D_m = <\boldsymbol{h_m}>^\perp$. We observe that $D_m$ is a submodule of $D_\cap$. Hence $B_{\boldsymbol{x}}$ equals a union of cosets of $D_m$. Therefore, as there is a one-to-one correspondence between the cosets of $D_m$ and the elements in $\{\boldsymbol{h}_m \cdot \boldsymbol{y} : \boldsymbol{y} \in R^n\}$ by Proposition \ref{proposition:cosets},
$$
|\{\boldsymbol{h}_m \cdot \boldsymbol{y} : \boldsymbol{y} \in B_{\boldsymbol{x}}\}| = \frac{|D_\cap|}{|D_m|}.
$$ 
Consequently, by \eqref{eq:cardinality_Ddual_frobenius}, 
$$
|A_{\boldsymbol{x}}| \leq \frac{|D_\cap|}{|D_m|} = \frac{|R|^n / |D_\cap^\perp|}{|R|^n / |D_m^\perp|} = \frac{|<\boldsymbol{h}_m>|}{|(<\boldsymbol{h}_m> \cap <\boldsymbol{h}_1,\ldots,\boldsymbol{h}_{m-1}>)|}.
$$
This implies that,
\begin{equation} \label{eq:|S_H|}
|S_H| = \sum_{\boldsymbol{x} \in S_{H'}} |A_{\boldsymbol{x}}| \leq |S_{H'}| \cdot \frac{|<\boldsymbol{h}_m>|}{|(<\boldsymbol{h}_m> \cap <\boldsymbol{h}_1,\ldots,\boldsymbol{h}_{m-1}>)|}
\end{equation}

By a well-known isomorphism theorem for modules, see for example Theorem 6.38 in \cite{sharp00}, the factor modules
$$
<\boldsymbol{h}_m> / (<\boldsymbol{h}_m> \cap <\boldsymbol{h}_1,\ldots,\boldsymbol{h}_{m-1}>) \hbox{ and } <\boldsymbol{h}_1,\ldots,\boldsymbol{h}_m> / <\boldsymbol{h}_1,\ldots,\boldsymbol{h}_{m-1}>
$$
are isomorphic. Consequently, by \eqref{eq:|S_D|}, \eqref{eq:|S_H|} and the assumption that $|S_{H'}| \leq |\col(S_{D'})|$,
$$
\begin{array}{rcl}
|S_H| & \leq & |S_{H'}| \cdot \frac{|<\boldsymbol{h}_1,\ldots,\boldsymbol{h}_m>|}{|<\boldsymbol{h}_1,\ldots,\boldsymbol{h}_{m-1}>|}\\
      & \leq & |\col(S_{D'})| \cdot \frac{|<\boldsymbol{h}_1,\ldots,\boldsymbol{h}_m>|}{|<\boldsymbol{h}_1,\ldots,\boldsymbol{h}_{m-1}>|}\\
			& = & |\col(S_D)|.
\end{array} 
$$
\end{proof}

Let $\decomp$ be the code that we get from a parity check system $(H|S)$ in Theorem \ref{theorem:parity check system to code}. We obtain an $R$-module structure on 
$$
\col(S_{D_C}) = \{H \boldsymbol{z}^T : \boldsymbol{z} \in R^n\}
$$
by defining
$$
\boldsymbol{x}^T + \boldsymbol{y}^T = (x_1 + y_1,\ldots,x_m + y_m)^T
\quad \hbox{and} \quad
r \boldsymbol{x}^T = (rx_1,\ldots,rx_m)^T, 
$$ 
for $\boldsymbol{x}^T, \boldsymbol{y}^T \in \col(S_{D_C})$ and $r \in R$. Further, let 
$$
\mathrm{ker}(\mathrm{col}(S)) = \{\boldsymbol{x}^T \in \col(S_{D_C}): r \boldsymbol{x}^T + \col(S) = \col(S) \hbox{ for all } r\in R\}.
$$
Since $H \boldsymbol{x}^T = (0,\ldots,0)^T$ if and only if $\boldsymbol{x} \in D_C$, we obtain that $\col(S_{D_C})$  is isomorphic to the factor module $R^n/D_C$ and $\mathrm{ker}(C)/D_C$ is a submodule of $R^n/D_C$. It is now straightforward to show the following proposition.

\begin{proposition}
Let $\decomp$ be the code that we get from a parity check system $(H|S)$ in Theorem \ref{theorem:parity check system to code}. Then
\begin{align*}
\hbox{(i) } & \mathrm{ker}(C) = \{ \boldsymbol{x} \in R^n : H \boldsymbol{x}^T \in \mathrm{ker}(\mathrm{col}(S)) \},\\ 
\hbox{(ii) } & C \hbox{ is a linear code} \quad \iff \quad \mathrm{col}(S) \hbox{ is a submodule of } \col(S_{D_C}).
\end{align*}
\end{proposition}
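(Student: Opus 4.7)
The plan is to exploit the surjective $R$-module homomorphism $\phi : R^n \to \col(S_{D_C})$ defined by $\phi(\boldsymbol{x}) = H\boldsymbol{x}^T$, whose kernel is exactly $D_C$ (since $H\boldsymbol{x}^T = (0,\ldots,0)^T$ iff $\boldsymbol{x} \in \langle \boldsymbol{h}_1,\ldots,\boldsymbol{h}_m\rangle^\perp = D_C$, as noted just before the proposition). From Theorem \ref{theorem:parity check system to code} one gets immediately that $C = \phi^{-1}(\col(S))$, and since every coset $\boldsymbol{d}_j + D_C$ maps to the single element $\scol{j}$, we also have $\phi(C) = \col(S)$.

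For part (i), I would unwind the definition: $\boldsymbol{x} \in \ker(C)$ means $r\boldsymbol{x} + C = C$ for all $r \in R$. Applying $\phi$ and using that $\phi$ is a surjective module homomorphism with $\phi(C) = \col(S)$, this translates to $r\phi(\boldsymbol{x}) + \col(S) = \col(S)$ for all $r \in R$, i.e., $\phi(\boldsymbol{x}) \in \ker(\col(S))$. The only thing to check carefully in the forward direction is that applying $\phi$ preserves the equality (immediate), and in the reverse direction that $\phi(\boldsymbol{x}) \in \ker(\col(S))$ lifts back to $r\boldsymbol{x} + C = C$ — which follows because $C = \phi^{-1}(\col(S))$, so $r\boldsymbol{x} + C = \phi^{-1}(r\phi(\boldsymbol{x}) + \col(S)) = \phi^{-1}(\col(S)) = C$.

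For part (ii), I would use the lattice-correspondence between submodules of $\col(S_{D_C}) \cong R^n/D_C$ and submodules of $R^n$ containing $D_C$. Since $C = \phi^{-1}(\col(S))$ contains $D_C$ (because $D_C = \phi^{-1}(\{0\}) \subseteq \phi^{-1}(\col(S))$, using that $(0,\ldots,0)^T \in \col(S)$, which in turn holds because $\col(S) = \phi(C)$ contains $\phi(\boldsymbol{d}_1 - \boldsymbol{d}_1) = 0$, or equivalently $D_C = \boldsymbol{d}_1 + D_C - \boldsymbol{d}_1 \subseteq C - \boldsymbol{d}_1$ combined with the fact that some coset's image is $0$). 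Then: if $\col(S)$ is a submodule, then $C = \phi^{-1}(\col(S))$ is a submodule (preimages of submodules under module homomorphisms are submodules); conversely, if $C$ is a submodule, then $\col(S) = \phi(C)$ is a submodule (image of a submodule under a module homomorphism).

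The only step that needs care is verifying that $0 \in \col(S)$, so that $D_C \subseteq C$ sits properly in the correspondence; this is where one uses that $\col(S) = \phi(C) \supseteq \phi(D_C) = \{0\}$ provided some coset representative can be chosen to lie in $D_C$. If none of the prescribed representatives does, one still has $\phi(\boldsymbol{d}_j - \boldsymbol{d}_j) = 0 \in \phi(C - \boldsymbol{d}_j)$, but what actually matters for the equivalence is just that $\phi$ restricted to $C$ has image $\col(S)$ and fibres of size $|D_C|$. With that in hand, both implications of (ii) are immediate, and the main obstacle — essentially bookkeeping — reduces to making the correspondence $C \leftrightarrow \col(S)$ fully explicit.
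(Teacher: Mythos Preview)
Your proposal is correct and is precisely what the paper intends: the paper writes no proof at all, simply declaring the proposition ``straightforward'' after setting up the $R$-module isomorphism $\col(S_{D_C}) \cong R^n/D_C$ via $\boldsymbol{x} \mapsto H\boldsymbol{x}^T$, which is exactly your $\phi$. Your worry about whether $\boldsymbol{0}^T \in \col(S)$ is unnecessary---both directions of (ii) follow immediately from the fact that images and preimages of submodules under a module homomorphism are submodules (with $\col(S) = \phi(C)$ since $\phi$ is surjective onto $\col(S_{D_C}) \supseteq \col(S)$), and (i) needs only the identity $r\boldsymbol{x} + \phi^{-1}(\col(S)) = \phi^{-1}(r\phi(\boldsymbol{x}) + \col(S))$.
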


We remark that a parity check matrix $H$ to a linear code $C$ over a finite field $\mathbb{F}_q$ corresponds to the parity check system $(H|\boldsymbol{0}^T)$. Moreover, it is straightforward to show that if the rows in a concatenation $(H|S)$ over $\mathbb{F}_q$ is linear independent, then $(H|S)$ is a parity check system if and only if condition (ii) in \eqref{def:parity check system} is satisfied.  Further, we observe that if $<\hrow{1},\ldots,\hrow{m}> \cong \oplus_{i = 1}^m  <\hrow{i}>$ for a concatenation $(H|S)$ over $R$, then (iii) in \eqref{def:parity check system} is trivially satisfied and the conditions (i) and (ii) in \eqref{def:parity check system} is enough to define a parity check system. 

As mentioned before, parity check systems for nonlinear codes over $\mathbb{F}_2$ was defined in \cite{heden08a}. In \cite{heden08a}, when defining parity check systems $(H|S)$ over $\mathbb{F}_2$, only condition (ii) in \eqref{def:parity check system} and the condition that the rows in $H$ are linear independent are used. These conditions are enough to use when defining parity check systems over finite fields, but not when defining parity check systems in general over finite commutative Frobenius rings. Moreover, in \cite{heden08a} it is always assumed that the zero-word $\boldsymbol{0}$ is contained in the binary nonlinear codes. This corresponds to the property that the zero-column $\boldsymbol{0}^T$ is contained in $\col(S)$. Hence, the zero-column is omitted in $\col(S)$ in the parity check systems given in \cite{heden08a}. (In the present paper we do not assume that the zero-word is contained in the codes. Therefore we do not omit the zero-column in $\col{S}$ when the zero-word is contained in the code.)   



\section{Fourier analysis on codes over finite commutative Frobenius rings} \label{section:dfa}

In Section \ref{subsection:dfa and finite abelian groups} we give some standard results on Fourier analysis on finite Abelian groups and introduce some notation. For an introduction to Fourier analysis on finite Abelian groups and for proofs of the results in Section \ref{subsection:dfa and finite abelian groups}, see for example Section 10 and 12 in \cite{terras99}. In Section \ref{subsec:DFA on R}, we give some results about Fourier analysis on $R^n$. Most of these results, except for Theorem  \ref{theorem:Fourier code} and \ref{theorem:poisson}, can be found in \cite{wood99}.

\subsection{Fourier analysis on finite Abelian groups $G$} \label{subsection:dfa and finite abelian groups}

Let $G$ be a finite (additive) Abelian group. By $\mathbb{C}^G$ we denote the vector space over $\mathbb{C}$ consisting of all functions from $G$ to $\mathbb{C}$. The addition of vectors and multiplication with scalars are defined in the following way: for $f,h:G \rightarrow \mathbb{C}$, $c \in \mathbb{C}$ and $x \in G$,
$$
\begin{array}{l}
(f + h)(x) = f(x) + h(x), \\
(cf)(x) = cf(x).
\end{array}
$$
Let $\langle \hbox{ , } \rangle$ denote the Hermitian inner product on $\mathbb{C}^G$ defined by
$$
\langle f,h \rangle = \sum_{x \in G} f(x) \overline{h(x)},
$$
where  $\overline{h(x)}$ denotes the complex conjugate of $h(x)$. The \emph{convolution} $f*h$, is defined by
$$
(f * h)(x) = \sum_{y \in G} f(y) h(x-y),
$$
and gives the vector space $\mathbb{C}^G$ the structure of an associative commutative algebra.

For $x,y \in G$, let $\delta_x$ denote the \emph{indicator function}, defined by
$$
\delta_x(y) = 
\left \{
\begin{array}{lll}
1 & \hbox{if} & x = y,\\
0 & \hbox{if} & x \neq y.
\end{array}
\right.
$$
The set of functions $\{\delta_{x} | x \in G\}$, constitutes an orthonormal basis of $\mathbb{C}^G$. 

Let $\mathbb{C}^\times$ denote the set $\mathbb{C} \setminus \{0\}$. A \emph{character} of a finite Abelian group $G$ is a group homomorphism from $G$ into the multiplicative group of $\mathbb{C^\times}$. The set of all characters constitutes a group $\widehat{G}$ under pointwise multiplication of the characters, that is,
\begin{equation} \label{eq:character_mult}
(\psi \chi) (x) = \psi(x) \chi(x), 
\end{equation}
for any $\psi, \chi \in \widehat{G}$ and $x \in G$. 

The set of characters of a finite Abelian group can explicitly be described as follows. Suppose that $G$ is equal to the finite Abelian group $\mathbb{Z}_{t_1} \oplus \ldots \oplus \mathbb{Z}_{t_k}$. For any $x = (x_1,\ldots,x_k), y = (y_1,\ldots,y_k) \in G$, define
\begin{equation} \label{eq:character abelian}
e_x(y) = \prod_{j = 1}^k e^{\frac{2 \pi i}{t_j} x_j y_j}.
\end{equation}
The set of maps $\{e_x | x \in G\}$ is the set of characters of $G$.

For any $\chi \in \widehat{G}$ and $x \in G$, we have that 
\begin{equation} \label{eq:minus_char}
\chi(-x) = \overline{\chi(x)} \quad \hbox{and} \quad \chi(0_G) = 1.
\end{equation}
If $G$ is a direct sum $G = G_1 \oplus G_2$, then all characters of $G$ has the form $\chi = (\chi_1,\chi_2) \in \widehat{G_1} \oplus \widehat{G_2}$ where
\begin{equation} \label{eq:direct sum of character}
\chi(g) = \chi_1(g_1) \chi_2(g_2)
\end{equation} 
for $g = (g_1,g_2)$ in $G$. Consequently, $\widehat{G}$ is isomorphic to $\widehat{G_1} \oplus \widehat{G_2}$. 

The character group $\widehat{G}$ is isomorphic to the finite Abelian group $G$. Under a fixed group isomorphism of $G$ with $\widehat{G}$, write $\chi_x$ for the image of every $x \in G$. The set of characters constitutes an orthogonal basis of $\mathbb{C}^G$, where
$$
\langle \chi_x , \chi_x \rangle = | G |\quad \hbox{and} \quad
\chi_x * \chi_y =
\left \{
\begin{array}{lcl}
| G | \chi_x & \hbox{if} & x = y,\\
0 & \hbox{if} & x \neq y.
\end{array}
\right.
$$
Any function $f:G \rightarrow \mathbb{C}$ is represented in $\mathbb{C}^G$ by the bases $\{\delta_x : x \in G\}$ and $\{\chi_x : x \in G\}$ as
$$
\sum_{x \in G} f(x) \delta_x \quad \hbox{ and } \quad  \frac{1}{|G|}\sum_{x \in G} \hat{f}(x) \chi_x,
$$  
where 
\begin{equation} \label{eq:fourier transform}
f(x) = \frac{1}{| G |} \sum_{y \in G} \hat{f}(y) \chi_y(x) \quad \hbox{and} \quad \hat{f}(x) = \sum_{y \in G} f(y) \chi_x(-y).
\end{equation}
The coefficients $\hat{f}(x)$ are called the \emph{Fourier coefficients} of the function $f$. A key property for many results using Fourier coefficients is that   
$$
\widehat{(f * h)}(x) =  \hat{f}(x) \hat{h}(x)
$$
for any $f, h \in \mathbb{C}^G$ and $x \in G$.

For any subset $B \subseteq G$, let
\begin{equation} \label{eq:def_character_dual}
B^\lozenge = \{\chi_x \in \widehat{G} | \chi_x(b) = 1 \hbox{ for all } b \in B\}.
\end{equation}
Let $D$ be a subgroup of $G$. Then the \emph{character dual} $D^\lozenge$ of $D$ is a subgroup of $\widehat{G}$ isomorphic to $\widehat{G/D}$. Therefore, 
\begin{equation} \label{eq:mid Dlozenge mid} 
| D^{\lozenge} | \hbox{ = } | G | /| D |. 
\end{equation}
Moreover,
\begin{equation}\label{eq:Fourier orthogonality}
\sum_{d \in D} \chi_x(d) =
\left \{
\begin{array}{ccl}
| D | & \hbox{if} & \chi_x \in D^\lozenge,\\
0 & \hbox{if} & \chi_x \notin D^\lozenge.
\end{array}
\right .
\end{equation}
For any element $y \in G$ and function $f \in \mathbb{C}^G$, the Poisson summation formula is as follows,
\begin{equation} \label{eq:coset poisson}
\sum_{d \in D} f(y + d) = \frac{| D |}{| G |} \sum_{\chi_x \in D^\lozenge} \hat{f}(x)\chi_x(y).
\end{equation}


\subsection{Fourier analysis on $R$-modules $R^n$} \label{subsec:DFA on R}

Let $\widehat{R}$ denote the character group of the additive group of $R$.  From \eqref{eq:direct sum of character}, $\widehat{R^n}$ and ${\widehat{R}}^n$ are isomorphic as groups and 
\begin{equation} \label{eq:chi_x}
\boldsymbol{\chi_x}(\boldsymbol{y}) = \prod_{i=1}^n \chi_{x_i}(y_i)
\end{equation}
for $\boldsymbol{\chi_x} = (\chi_{x_1},\ldots,\chi_{x_n}) \in \widehat{R}^n$ and $\boldsymbol{y} = (y_1,\ldots,y_n) \in R^n$. Consequently, for any subset $B \subseteq R^n$,
$$
B^{\lozenge} = \{\boldsymbol{\chi_x} \in \widehat{R}^n : \prod_{i = 1}^n \chi_{x_i}(y_i) = 1 \hbox{ for all } \boldsymbol{y} \in B\}
$$ 
We have an $R$-module structure on $\widehat{R}$ by the operation
$$
\chi_x^r(y) = \chi_x(ry)
$$
for $\chi_x \in \widehat{R}$ and  $r,y \in R$. Thus, we also have an $R$-module structure on $\widehat{R}^n$ by the operation
$$
\boldsymbol{\chi_x}^r = (\chi_{x_1}^r,\ldots,\chi_{x_n}^r)
$$
for $\boldsymbol{\chi_x} \in \widehat{R}^n$ and  $r \in R$. 

A \emph{generating character} of $\widehat{R}$ is a character $\chi$ such that $\widehat{R} = \{\chi^r : r \in R\}$. Finite commutative Frobenius rings can be characterized by generating characters. Namely, a finite commutative ring $R$ is Frobenius if and only if $\widehat{R}$ has a generating character. Examples of generating characters for some finite commutative Frobenius rings $R$ are: 
$$
\chi(y) = e^{\frac{2 \pi i}{t} y} \hbox{ for } y \in R = \mathbb{Z}_t \hbox{ and } \chi(y) = e^{\frac{2 \pi i}{p} \mathrm{Tr}(y)} \hbox{ for } y \in R = \mathbb{F}_q,
$$ 
where $q = p^l$ for some prime $p$ and $\mathrm{Tr}(y) = y + y^p + \ldots + y^{p^{l-1}}$. See \cite{wood99} for some more examples of generating characters.

Let $R$ be finite commutative Frobenius ring and $\varepsilon$ be a generating character of $\widehat{R}$. Then $R$ and $\widehat{R}$ are isomorphic as $R$-modules via the map $\phi : R \rightarrow \widehat{R}$, where $\phi(x) = \varepsilon^x$. Further, let
\begin{equation} \label{eq:def_generating_char}
\echar{x}= (\varepsilon^{x_1},\ldots,\varepsilon^{x_n})
\end{equation}
for $\boldsymbol{x} \in R^n$. Then $R^n$ and $\widehat{R}^n$ are isomorphic as $R$-modules via the map $\varphi: R^n \rightarrow \widehat{R}^n$, where
\begin{equation} \label{eq:module_iso}
\varphi(\boldsymbol{x}) = \echar{x}.
\end{equation}
Hence, by \eqref{eq:fourier transform}, any function $f:R^n \rightarrow \mathbb{C}$ is represented in $\mathbb{C}^{R^n}$ by the bases $\{\delta_{\boldsymbol{x}} : \boldsymbol{x} \in R^n\}$ and $\{\echar{x} : \boldsymbol{x} \in R^n\}$ as
$$
\sum_{\boldsymbol{x} \in R^n} f(\boldsymbol{x}) \delta_{\boldsymbol{x}} \quad \hbox{ and } \quad  \frac{1}{|R|^n}\sum_{\boldsymbol{x} \in R^n} \hat{f}(\boldsymbol{x}) \echar{x},
$$  
where 
\begin{equation} \label{eq:fourier transform M}
f(\boldsymbol{x}) = \frac{1}{| R |^n} \sum_{\boldsymbol{y} \in R^n} \hat{f}(\boldsymbol{y}) \echararg{y}{x} \quad \hbox{and} \quad \hat{f}(\boldsymbol{x}) = \sum_{\boldsymbol{y} \in R^n} f(\boldsymbol{y}) \echarargminus{x}{y}.
\end{equation}

By \eqref{eq:chi_x} and the property that $\varepsilon^{r + r'} = \varepsilon^{r} \varepsilon^{r'} $ for $r,r' \in R$, it follows that
\begin{equation} \label{eq:char_value}
\echararg{x}{y} = \prod_{i = 1}^n \varepsilon^{x_i}(y_i) = \prod_{i = 1}^n \varepsilon^{x_i y_i}(1) = \varepsilon^{\boldsymbol{x} \cdot \boldsymbol{y}}(1) = \varepsilon (\boldsymbol{x} \cdot \boldsymbol{y})
\end{equation}
for $\boldsymbol{x}, \boldsymbol{y} \in R^n$. Now, let $D$ be a submodule of $R^n$. Then, by \eqref{eq:cardinality_Ddual_frobenius} and\eqref{eq:mid Dlozenge mid},
$$
|D^\lozenge| = |D^\perp| = \frac{|R|^n}{|D|}.
$$
For any $\boldsymbol{x} \in D^\perp$, we have $\echar{x} \in D^\lozenge$ by \eqref{eq:char_value}. Hence, 
\begin{equation} \label{eq:lozenge_perp}
D^\lozenge = \{\echar{x} : \boldsymbol{x} \in D^\perp\}.
\end{equation}

Any code $C$ of $R^n$ can be represented by an element $\delta_C$ of $\mathbb{C}^{R^n}$, where 
$$
\delta_C(\boldsymbol{x}) = 
\left \{
\begin{array}{lcl}
1 & \hbox{if} & \boldsymbol{x} \in C,\\
0 & \hbox{if} & \boldsymbol{x} \notin C.
\end{array}
\right .
$$
Consequently, 
\begin{equation} \label{eq:def_fourier_repr_C}
\delta_{C} = \sum_{\boldsymbol{c} \in C} \delta_{\boldsymbol{c}}.
\end{equation}

\begin{theorem} \label{theorem:Fourier code}
For any code $C = \bigcup_{j = 1}^s (\boldsymbol{d}_j + D_C)$,
$$
\hat{\delta}_C(\boldsymbol{x}) =
\left \{
\begin{array}{ccl}
| D_C | \sum_{j=1}^s \echarargminusindex{x}{d}{j} & \hbox{if} & \boldsymbol{x} \in D_C^\perp,\\
0	& \hbox{if} & \boldsymbol{x} \notin D_C^\perp.
\end{array}
\right .
$$
\end{theorem}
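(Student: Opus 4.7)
The plan is to expand $\hat{\delta}_C$ by linearity, break the sum along the coset decomposition of $C$, factor out the contribution of each coset representative using the homomorphism property of characters, and then apply the orthogonality relation \eqref{eq:Fourier orthogonality} to the subgroup $D_C$.

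First, from \eqref{eq:def_fourier_repr_C} and the formula for the Fourier transform in \eqref{eq:fourier transform M}, linearity of $f \mapsto \hat{f}$ gives $\hat{\delta}_C(\boldsymbol{x}) = \sum_{\boldsymbol{c} \in C} \echarargminus{x}{c}$. Splitting the outer sum according to $\decomp$ and using that $\echar{x}$ is a group homomorphism on the additive group of $R^n$, so that $\echararg{x}{-(\boldsymbol{d}_j + \boldsymbol{d})} = \echarargminusindex{x}{d}{j}\, \echarargminus{x}{d}$, one gets
$$
\hat{\delta}_C(\boldsymbol{x}) \;=\; \sum_{j=1}^s \echarargminusindex{x}{d}{j} \sum_{\boldsymbol{d} \in D_C} \echarargminus{x}{d}.
$$

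The crux is the inner sum. Since $D_C$ is a subgroup of $R^n$, the map $\boldsymbol{d} \mapsto -\boldsymbol{d}$ is a bijection of $D_C$, so $\sum_{\boldsymbol{d} \in D_C} \echarargminus{x}{d} = \sum_{\boldsymbol{d} \in D_C} \echararg{x}{d}$. Applying orthogonality \eqref{eq:Fourier orthogonality} to the character $\echar{x} \in \widehat{R^n}$ on the subgroup $D_C$ yields $|D_C|$ if $\echar{x} \in D_C^\lozenge$ and $0$ otherwise. The identification \eqref{eq:lozenge_perp} then translates the condition $\echar{x} \in D_C^\lozenge$ into $\boldsymbol{x} \in D_C^\perp$, and substituting this dichotomy back into the displayed formula gives exactly the claimed piecewise expression.

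There is no serious obstacle here; the theorem is essentially a direct unwinding of the definitions once one has \eqref{eq:lozenge_perp} and \eqref{eq:Fourier orthogonality} in hand. The only place that needs a moment of care is the sign manipulation: converting the $-\boldsymbol{d}$ appearing in $\echarargminus{x}{d}$ into a form where the orthogonality relation can be read off directly. Because characters are group homomorphisms (cf.\ \eqref{eq:minus_char}) and $D_C$ is closed under negation, this rearrangement is harmless and the argument goes through in two short computations.
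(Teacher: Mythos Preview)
Your proof is correct and follows essentially the same route as the paper's own argument: expand $\hat{\delta}_C$ as a sum over $C$, split along the coset decomposition, factor using the homomorphism property of $\echar{x}$, and finish with \eqref{eq:Fourier orthogonality} together with the identification \eqref{eq:lozenge_perp}. The only cosmetic difference is in the sign bookkeeping: the paper silently reindexes $\boldsymbol{d} \mapsto -\boldsymbol{d}$ when passing from $\echarargminus{x}{c}$ to $\echar{x}(-\boldsymbol{d}_j + \boldsymbol{d})$, whereas you make that bijection explicit before invoking orthogonality.
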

\begin{proof}
By \eqref{eq:fourier transform M},
\begin{align*}
\hat{\delta}_C(\boldsymbol{x})  & = 
\sum_{\boldsymbol{y} \in R^n} \delta_C (\boldsymbol{y}) \echarargminus{x}{y}   =
\sum_{\boldsymbol{c} \in C} \echarargminus{x}{c} = 
\sum_{i = 1}^s \sum_{\boldsymbol{d} \in D_C} \echarargminuscoset{x}{d}{i}\\
& = \sum_{i = 1}^s \sum_{\boldsymbol{d} \in D_C} \echarargminusindex{x}{d}{i} \echararg{x}{d} = 
\sum_{i=1}^s \echarargminusindex{x}{d}{i} \left ( \sum_{\boldsymbol{d} \in D_C} \echararg{x}{d} \right ). 
\end{align*}
The theorem now follows using \eqref{eq:Fourier orthogonality} and \eqref{eq:lozenge_perp}.
\end{proof}

\begin{example}
Let $\decomparg{3}$ be the code over $\mathbb{Z}_6$ in Example \ref{ex:C_to_PC}, where
$$
\begin{array}{l}
\boldsymbol{d}_1 = (0,0,0,0) \hbox{, } \boldsymbol{d}_2 = (5,2,0,0) \hbox{, } \boldsymbol{d}_3 = (4,1,0,0),\\
D_C = <(2,1,1,0),(0,1,0,1),(3,0,3,0)>,\\
D_C^\perp = <(1,1,3,5),(0,4,2,2)>.
\end{array}
$$
First we observe that 
$$
|D_C| = \frac{|\mathbb{Z}_6^4|}{|D_C^\perp|} = \frac{6^4}{18} = 72.
$$
Choose $\varepsilon(x) = e^{\frac{2 \pi i}{6} x}$ as a generating character for $\widehat{\mathbb{Z}_6}$. Let $\boldsymbol{x}$ be the element $1(1,1,3,5) + 2(0,4,2,2) = (1,3,1,3) \in D_C^\perp$. Then, by Theorem \ref{theorem:Fourier code} and \eqref{eq:char_value},
$$
\begin{array}{rcl}
\hat{\delta}_C(\boldsymbol{x}) & = & | D_C | \sum_{j=1}^3 \echarargminusindex{x}{d}{j}\\
                               & = & 72 (\boldsymbol{\varepsilon}_{(1,3,1,3)}(0,0,0,0) + \boldsymbol{\varepsilon}_{(1,3,1,3)}(1,4,0,0) + \boldsymbol{\varepsilon}_{(1,3,1,3)}(2,5,0,0))\\
															 & = & 72(\varepsilon ((1,3,1,3) \cdot (0,0,0,0)) + \varepsilon ((1,3,1,3) \cdot (1,4,0,0)) + \\
															 && \varepsilon ((1,3,1,3) \cdot (2,5,0,0)))\\
															 & = & 72(\varepsilon(0) + \varepsilon(1) + \varepsilon(5))\\
															 & = & 72(e^{\frac{2 \pi i}{6} 0} + e^{\frac{2 \pi i}{6} 1} + e^{\frac{2 \pi i}{6} 5} )\\
															 & = & 144.
\end{array}      
$$
\end{example}

For any code $C$, let $-C = \{- \boldsymbol{c} : \boldsymbol{c} \in C\}$.

\begin{theorem} \label{theorem:poisson}
For any code $\decomp$ and $f \in \mathbb{C}^{R^n}$,
$$
\sum_{\boldsymbol{c} \in C} f(\boldsymbol{c}) = 
\frac{1}{| R | ^n} \sum_{\boldsymbol{x} \in D_C^\perp} \hat{f}(\boldsymbol{x}) \hat{\delta}_{-C}(\boldsymbol{x}).
$$
\end{theorem}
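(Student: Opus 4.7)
The plan is to expand $f$ in its Fourier basis, interchange summation, and recognize the inner sum as (up to a sign flip) $\hat{\delta}_{-C}(\boldsymbol{x})$, at which point Theorem~\ref{theorem:Fourier code} forces the support of the outer sum to lie inside $D_C^\perp$.

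First I would write, using the Fourier inversion half of \eqref{eq:fourier transform M},
$$
\sum_{\boldsymbol{c} \in C} f(\boldsymbol{c})
= \sum_{\boldsymbol{c} \in C} \frac{1}{|R|^n} \sum_{\boldsymbol{x} \in R^n} \hat{f}(\boldsymbol{x}) \echararg{x}{c}
= \frac{1}{|R|^n} \sum_{\boldsymbol{x} \in R^n} \hat{f}(\boldsymbol{x}) \sum_{\boldsymbol{c} \in C} \echararg{x}{c},
$$
where the swap of sums is justified because both are finite.

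Next I would identify the inner sum. Using \eqref{eq:char_value} and the substitution $\boldsymbol{c} \mapsto -\boldsymbol{c}$ between $C$ and $-C$,
$$
\sum_{\boldsymbol{c} \in C} \echararg{x}{c}
= \sum_{\boldsymbol{c} \in C} \varepsilon(\boldsymbol{x} \cdot \boldsymbol{c})
= \sum_{\boldsymbol{c}' \in -C} \varepsilon(-\boldsymbol{x} \cdot \boldsymbol{c}')
= \sum_{\boldsymbol{c}' \in -C} \echarargminus{x}{c'}
= \hat{\delta}_{-C}(\boldsymbol{x}),
$$
where the last equality follows from \eqref{eq:def_fourier_repr_C} and \eqref{eq:fourier transform M}.

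Finally, observe that $-C = \bigcup_{j=1}^s(-\boldsymbol{d}_j + D_C)$ is itself a coset decomposition with the same partial kernel $D_C$ (since $D_C$ is a submodule, hence $-D_C = D_C$). Therefore Theorem~\ref{theorem:Fourier code}, applied to $-C$, tells us $\hat{\delta}_{-C}(\boldsymbol{x}) = 0$ whenever $\boldsymbol{x} \notin D_C^\perp$, and the sum over $R^n$ collapses to a sum over $D_C^\perp$, yielding the asserted identity.

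There is no real obstacle here: the argument is essentially the standard Plancherel-style manipulation in a finite Abelian group combined with the vanishing statement from Theorem~\ref{theorem:Fourier code}. The only bookkeeping care needed is tracking the sign in the character argument, which is exactly what the $-C$ in the statement absorbs.
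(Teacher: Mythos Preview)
Your proof is correct. The route differs from the paper's: the paper decomposes $C$ into its cosets and applies the Poisson summation formula \eqref{eq:coset poisson} to each coset $\boldsymbol{d}_j + D_C$ separately, obtaining $\frac{|D_C|}{|R|^n}\sum_{\boldsymbol{x}\in D_C^\perp}\hat f(\boldsymbol{x})\echarargindex{x}{d}{j}$, and only then sums over $j$ and invokes Theorem~\ref{theorem:Fourier code} to identify $|D_C|\sum_j \echarargindex{x}{d}{j}$ with $\hat\delta_{-C}(\boldsymbol{x})$. You instead keep $C$ intact, use Fourier inversion \eqref{eq:fourier transform M} directly on $f$, and recognize $\sum_{\boldsymbol{c}\in C}\echararg{x}{c}$ as $\hat\delta_{-C}(\boldsymbol{x})$ straight from the definition, calling on Theorem~\ref{theorem:Fourier code} only for the vanishing outside $D_C^\perp$. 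Your approach is marginally more elementary since it bypasses the Poisson black box and the coset bookkeeping; the paper's approach makes the role of the coset structure and the classical Poisson formula more visible. Both are short and rest on the same character orthogonality underneath.
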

\begin{proof}
By \eqref{eq:coset poisson}, \eqref{eq:fourier transform M} and \eqref{eq:lozenge_perp} 
\begin{align*}	
\sum_{\boldsymbol{c} \in C} f(\boldsymbol{c}) & 
= \sum_{j = 1}^s  \sum_{\boldsymbol{d} \in D_C} f(\boldsymbol{d}_j + \boldsymbol{d}) 
= \sum_{j = 1}^s \frac{| D_C |}{| R |^n} \sum_{\boldsymbol{x} \in D_C^\perp} \hat{f}(\boldsymbol{x}) \echarargindex{x}{d}{j} \\
& = \frac{1}{| R |^n} \sum_{\boldsymbol{x} \in D_C^\perp} \hat{f}(\boldsymbol{x}) \left (| D_C | \sum_{j=1}^s \echarargindex{x}{d}{j} \right ).
\end{align*}
Since $-C$ has the coset decomposition $\bigcup_{j = 1}^s (-\boldsymbol{d}_j + D_C)$, the theorem now follows from Theorem \ref{theorem:Fourier code}. 
\end{proof}


\section{Parity check systems, distance and Fourier coefficients} \label{sec:distance}

Section \ref{subsec:distance and error} deals with how to get the minimum distance of a code and how to do error-correcting by the use of a parity check system.  In general, it is not easy to give values to the Fourier coefficients $\hat{f}(\boldsymbol{x})$ of a member $f$ in $\mathbb{C}^{R^n}$ such that $f$ equals $\delta_C$ for some code $C$ over $R$. In section \ref{subsec:fourier coefficient} we give a formula on how to get the Fourier coefficients $\widehat{\delta}_C(\boldsymbol{x})$ of a code $C$ by the use of an associated parity check system. A MacWiliam type of distance identity between the code and an associated parity check system is given in Section \ref{subsec: identity distance}.

\subsection{Minimum distance and error-correcting} \label{subsec:distance and error}

As defined in \eqref{eq:min_supp_weight} and \eqref{eq:mindist}, we recall that $\wt{x}$ and $\mindist{C}$ denote the Hamming weight and distance of an element $\boldsymbol{x} \in R^n$ and a code $C$, respectively. For any parity check system $(H | S)$, let $\sdist$ be the following set of column vectors of size $m$,
$$
\sdist = \{\scol{l} - \scol{k} : 1 \leq k < l \leq s\} \cup (0,\ldots,0)^T
$$
The following theorem shows how a parity check system of a code $C$ can be used to derive the minimum distance of $C$.

\begin{theorem} \label{theorem:d(C)} 
Let $(H|S)$ be a parity check system of a code $C$. Then
$$
\mindist{C} = \min \{ \wt{x} : \boldsymbol{x} \in R^n \setminus \{\boldsymbol{0}\} \hbox{ and }  H \boldsymbol{x}^T \in  \sdist\}.
$$
\end{theorem}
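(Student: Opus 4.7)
The plan is to rewrite $\mindist{C}$ as a minimum over weights of differences $\boldsymbol{c} - \boldsymbol{c}'$ of distinct codewords, and then show that this ranges precisely over the nonzero $\boldsymbol{x} \in R^n$ with $H\boldsymbol{x}^T \in \sdist$. The proof is then a pair of set inclusions, both coming almost directly from Proposition \ref{proposition:cosets} and Theorem \ref{theorem:parity check system to code}.

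For the inequality $\mindist{C} \geq \min\{\wt{x}: \boldsymbol{x} \neq \boldsymbol{0},\, H\boldsymbol{x}^T \in \sdist\}$: take any $\boldsymbol{c} \neq \boldsymbol{c}'$ in $C$, lying in cosets $\boldsymbol{d}_k + D_C$ and $\boldsymbol{d}_l + D_C$ respectively. Using that $\wt{\boldsymbol{c}-\boldsymbol{c}'} = \wt{-(\boldsymbol{c}-\boldsymbol{c}')}$ (negation preserves supports), I may assume $k \leq l$. By Theorem \ref{theorem:parity check system to code}(ii) we have $H\boldsymbol{c}^T = \scol{l}$ and $H\boldsymbol{c}'^T = \scol{k}$, so $H(\boldsymbol{c}-\boldsymbol{c}')^T = \scol{l} - \scol{k} \in \sdist$ (interpreting this as $\boldsymbol{0}$ when $k=l$). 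Since $\boldsymbol{c}-\boldsymbol{c}' \neq \boldsymbol{0}$, this gives a candidate $\boldsymbol{x}$ on the right side with $\wt{x} = \wt{\boldsymbol{c}-\boldsymbol{c}'}$.

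For the reverse inequality, take $\boldsymbol{x} \neq \boldsymbol{0}$ with $H\boldsymbol{x}^T \in \sdist$. I split into two cases. If $H\boldsymbol{x}^T = \boldsymbol{0}$, then by Theorem \ref{theorem:parity check system to code}(i) (or Proposition \ref{proposition:cosets} with $\boldsymbol{y}=\boldsymbol{0}$) we have $\boldsymbol{x} \in D_C$. Picking any codeword $\boldsymbol{c}' \in C$ and using that $D_C$ is a partial kernel, so $\boldsymbol{x} + C = C$, the element $\boldsymbol{c} := \boldsymbol{c}' + \boldsymbol{x}$ is in $C$, distinct from $\boldsymbol{c}'$, with $\wt{\boldsymbol{c}-\boldsymbol{c}'} = \wt{x}$. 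If instead $H\boldsymbol{x}^T = \scol{l} - \scol{k}$ for some $k<l$, then
$$
H\boldsymbol{x}^T = H\boldsymbol{d}_l^T - H\boldsymbol{d}_k^T = H(\boldsymbol{d}_l - \boldsymbol{d}_k)^T,
$$
so by Proposition \ref{proposition:cosets} the element $\boldsymbol{x} - (\boldsymbol{d}_l - \boldsymbol{d}_k)$ lies in $D_C$; write it as $\boldsymbol{d} \in D_C$. Then $\boldsymbol{c} := \boldsymbol{d}_l + \boldsymbol{d}$ and $\boldsymbol{c}' := \boldsymbol{d}_k$ are distinct codewords with $\boldsymbol{c} - \boldsymbol{c}' = \boldsymbol{x}$, so $\mindist{C} \leq \wt{x}$.

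Combining the two cases with the first inequality proves equality. There is no real obstacle; the only small care points are noticing that $\wt{\cdot}$ is invariant under negation (needed to reduce to $k \leq l$ in $\sdist$), and using the partial-kernel property in the zero-syndrome case to promote $\boldsymbol{c}'$ to a second codeword $\boldsymbol{c}' + \boldsymbol{x}$.
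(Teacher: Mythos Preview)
Your proof is correct and follows essentially the same route as the paper's: both arguments use the coset decomposition from Theorem~\ref{theorem:parity check system to code}, convert coset membership into syndrome equalities via Proposition~\ref{proposition:cosets}, and invoke $\wt{x}=\wtminus{x}$ to handle the asymmetry in the definition of $\sdist$. There is a small indexing slip in your forward direction (with $\boldsymbol{c}\in\boldsymbol{d}_k+D_C$, $\boldsymbol{c}'\in\boldsymbol{d}_l+D_C$ and $k\le l$ you actually get $H(\boldsymbol{c}-\boldsymbol{c}')^T=\scol{k}-\scol{l}$, not $\scol{l}-\scol{k}$), but this is precisely what your negation remark already repairs.
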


\begin{proof}
Let $\bigcup_{j = 1}^s (\boldsymbol{d}_j + D_C)$ be the coset decomposition of $C$ that corresponds to the parity check system $(H|S)$ in Theorem \ref{theorem:parity check system to code}. The minimum distance of $C$ equals the least positive integer $d$ such that 
$$
\boldsymbol{x} + \boldsymbol{d}_{k} + D_C = \boldsymbol{d}_{l} + D_C
$$
for an element $\boldsymbol{x} \in R^n \setminus \{\boldsymbol{0}\}$ with $\wt{x} = d$ and some $k,l \in [s]$. By Proposition \ref{proposition:cosets} and the property that $D_C = <\hrow{1},\ldots,\hrow{m}>^\perp$ we may conclude that the equality above is equivalent to 
$$
H (\boldsymbol{x} + \boldsymbol{d}_{k})^T = H \boldsymbol{d}_{l}^T \iff H \boldsymbol{x}^T = H \boldsymbol{d}_{l}^T - H \boldsymbol{d}_{k}^T \iff H (-\boldsymbol{x})^T = H \boldsymbol{d}_{k}^T - H \boldsymbol{d}_{l}^T.
$$
The theorem now follows from the observations that $\wt{x} = \wtminus{x}$ and that $k = l$ implies that $H \boldsymbol{x}^T = \boldsymbol{0}^T$ in the equality above. 
\end{proof}

\begin{example}
Let $C$ be the code associated to the following parity check system $(H|S)$ over $\mathbb{Z}_6$,
$$
(H|S) = 
\left (
\begin{array}{cccc|ccc}
1&1&3&5&0&1&5\\
0&4&2&2&0&2&4
\end{array}
\right )
.
$$
This gives that $\sdist = \{(0,0)^T,(1,2)^T, (5,4)^T, (4,2)^T\}$ and that the minimum distance of $C$ equals to 2 since
$$
\{H \boldsymbol{x}^T : \boldsymbol{x} \in \mathbb{Z}_6^4, \wt{x} = 1\} \cap \sdist = \emptyset
$$
and $H (5,0,0,1)^T = (4,2)^T \in \sdist$.
\end{example}

We recall that $\dist{x}{y}$ denotes the distance between two elements $\boldsymbol{x}, \boldsymbol{y} \in R^n$, as defined in \eqref{eq:min_supp_weight}. For any code $C$ and $\boldsymbol{x} \in R^n$, let $\cdist{x} = \min \{\dist{x}{c} : \boldsymbol{c} \in C\}$. It is straightforward to show that if $\cdist{x} \leq \lfloor \frac{\mindist{C} - 1}{2} \rfloor$, then there is a unique codeword $\boldsymbol{c} \in C$, here called the \emph{nearest neighbor} to $\boldsymbol{x}$ in $C$, such that $\dist{x}{c} = \cdist{x}$.

\begin{theorem} \label{theorem:error-correct} 
Let $(H|S)$ be a parity check system of a code $C$ and let $\boldsymbol{x} \in R^n$ be an element such that $\cdist{x} \leq \lfloor \frac{\mindist{C} - 1}{2} \rfloor\}$. Then there is a unique $l \in [s]$ and $\boldsymbol{y} \in R^n$ with $\wt{y} = \cdist{x}$ such that
$$
H \boldsymbol{x}^T = \scol{l} + H \boldsymbol{y}^T.
$$
The nearest neighbor to $\boldsymbol{x}$ in $C$ is the codeword $\boldsymbol{x} - \boldsymbol{y}$.
\end{theorem}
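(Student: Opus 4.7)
The plan is to exploit the uniqueness of the nearest neighbor (noted in the paragraph preceding the theorem) together with Theorem \ref{theorem:parity check system to code}, which identifies each coset $\boldsymbol{d}_l + D_C$ with the fiber $\{\boldsymbol{z} \in R^n : H\boldsymbol{z}^T = \scol{l}\}$.

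For existence, I would start from the hypothesis $\cdist{x} \leq \lfloor(\mindist{C}-1)/2\rfloor$, which guarantees a unique codeword $\boldsymbol{c} \in C$ with $\dist{x}{c} = \cdist{x}$. Set $\boldsymbol{y} = \boldsymbol{x} - \boldsymbol{c}$, so that $\wt{y} = \wt{x-c} = \dist{x}{c} = \cdist{x}$. Since $\boldsymbol{c} \in C = \bigcup_{j=1}^s (\boldsymbol{d}_j + D_C)$, there is some $l \in [s]$ with $\boldsymbol{c} \in \boldsymbol{d}_l + D_C$, and by part (ii) of Theorem \ref{theorem:parity check system to code} this forces $H\boldsymbol{c}^T = \scol{l}$. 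Adding $H\boldsymbol{y}^T$ to both sides yields $H\boldsymbol{x}^T = \scol{l} + H\boldsymbol{y}^T$, and simultaneously exhibits $\boldsymbol{x} - \boldsymbol{y} = \boldsymbol{c}$ as the nearest neighbor.

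For uniqueness, suppose $l' \in [s]$ and $\boldsymbol{y}' \in R^n$ with $\wt{y'} = \cdist{x}$ also satisfy $H\boldsymbol{x}^T = \scol{l'} + H(\boldsymbol{y}')^T$. Put $\boldsymbol{c}' = \boldsymbol{x} - \boldsymbol{y}'$. Then $H(\boldsymbol{c}')^T = \scol{l'}$, so by Theorem \ref{theorem:parity check system to code}(ii) we get $\boldsymbol{c}' \in \boldsymbol{d}_{l'} + D_C \subseteq C$. Moreover $\dist{x}{c'} = \wt{y'} = \cdist{x}$, so $\boldsymbol{c}'$ is a nearest neighbor of $\boldsymbol{x}$ in $C$; by uniqueness of the nearest neighbor, $\boldsymbol{c}' = \boldsymbol{c}$. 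This immediately gives $\boldsymbol{y}' = \boldsymbol{x} - \boldsymbol{c}' = \boldsymbol{y}$, and since the cosets in the decomposition are disjoint, $l' = l$.

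The main obstacle is essentially bookkeeping rather than substantive: one must be careful that the uniqueness statement is about the pair $(l,\boldsymbol{y})$ and not merely the coset $\boldsymbol{d}_l + D_C$, which is handled cleanly once one recognizes that fixing the syndrome $H\boldsymbol{c}^T = \scol{l}$ pins down $l$ uniquely (the columns of $S$ being distinct by condition (ii) of \eqref{def:parity check system}) and then $\boldsymbol{y} = \boldsymbol{x} - \boldsymbol{c}$ is forced. Once existence and uniqueness are in hand, the final statement that the nearest neighbor equals $\boldsymbol{x} - \boldsymbol{y}$ is immediate from the construction.
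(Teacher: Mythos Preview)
Your proof is correct and follows essentially the same approach as the paper: start from the unique nearest neighbor $\boldsymbol{c}$, set $\boldsymbol{y}=\boldsymbol{x}-\boldsymbol{c}$, locate the coset $\boldsymbol{d}_l+D_C$ containing $\boldsymbol{c}$, and translate coset membership into the syndrome equation. The paper invokes Proposition~\ref{proposition:cosets} where you cite Theorem~\ref{theorem:parity check system to code}(ii), but these amount to the same thing here; your version is in fact more complete, since you spell out the uniqueness argument that the paper leaves implicit.
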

\begin{proof}
Let $\bigcup_{j = 1}^s (\boldsymbol{d_j} + D_C)$ be the coset decomposition to $C$ that corresponds to the parity check system $(H|S)$ in Theorem \ref{theorem:parity check system to code}. Let $\boldsymbol{c}$ be the unique nearest neighbor to $\boldsymbol{x}$ in $C$ and let $\boldsymbol{y} = \boldsymbol{x} - \boldsymbol{c}$. Further, let $l$ be the element in $[s]$ such that $\boldsymbol{c} \in \boldsymbol{d}_{l} + D_C$. Then $\wt{y} = \cdist{x}$, and by Proposition \ref{proposition:cosets} we obtain that
$$
(\boldsymbol{x} - \boldsymbol{y}) + D_C = (\boldsymbol{d}_{l} + D_C)
\quad \iff \quad 
H \boldsymbol{x}^T = \scol{l} + H \boldsymbol{y}^T.
$$
\end{proof}

\subsection{The Fourier coefficient $\hat{\delta}_C (\boldsymbol{x})$} \label{subsec:fourier coefficient}

For any parity check system $(H|S)$  and linear combination $\boldsymbol{x} = \sum_{i = 1}^m r_i \hrow{i}$, let $\srowelem{x}$ denote the following vector in $R^s$:
\begin{equation} \label{eq:def_Sx}
\srowelem{x} = (\srowelem{x}(1),\ldots,\srowelem{x}(s)) = \sum_{i = 1}^m r_i \srow{i}.
\end{equation}
From (iii) in \eqref{def:parity check system}, we have that $\srowelem{x}$ is well-defined. 

The next theorem shows how the Fourier coefficients of a code $C$ can be derived from a parity check system of $C$. We recall that $\varepsilon$ is a generating character of $\widehat{R}$. Moreover, see \eqref{eq:def_generating_char} for the definition of $\echar{x}$.

\begin{theorem} \label{theorem:Fourier parity check system}
Let $(H|S)$ be a parity check system of a code $C$. Then, for $\boldsymbol{x} \in R^n$, 
$$
\hat{\delta}_C(\boldsymbol{x}) =
\left \{
\begin{array}{ccl}
\frac{| R |^n}{|<\hrow{1},\ldots,\hrow{m}>|} 
\sum_{j=1}^s \varepsilon(-\srowelem{x}(j)) & \hbox{if} & \boldsymbol{x} \in <\hrow{1},\ldots,\hrow{m}>,\\
0	& \hbox{if} & \boldsymbol{x} \notin <\hrow{1},\ldots,\hrow{m}>.
\end{array}
\right .
$$
\end{theorem}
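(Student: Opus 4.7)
The plan is to derive Theorem~\ref{theorem:Fourier parity check system} as an immediate corollary of Theorem~\ref{theorem:Fourier code}, by rewriting the quantities $|D_C|$ and $\echarargminusindex{x}{d}{j}$ appearing there in terms of data that is directly visible in the parity check system $(H|S)$. There is no real obstacle; the content is essentially a translation through the definitions plus the identities \eqref{eq:equality_D_Ddualdual}, \eqref{eq:cardinality_Ddual_frobenius} and \eqref{eq:char_value}.

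First I would identify the relevant ``dual'' module. Since $(H|S)$ is a parity check system of $C$, by Theorem~\ref{theorem:parity check system to code} we have $D_C = \langle \hrow{1},\ldots,\hrow{m}\rangle^\perp$. Applying \eqref{eq:equality_D_Ddualdual} yields
$$
D_C^\perp = \langle \hrow{1},\ldots,\hrow{m}\rangle,
$$
so the case dichotomy $\boldsymbol{x} \in D_C^\perp$ vs.\ $\boldsymbol{x}\notin D_C^\perp$ in Theorem~\ref{theorem:Fourier code} is exactly the case dichotomy asserted in Theorem~\ref{theorem:Fourier parity check system}. In the ``$\notin$'' case both formulas give $0$, so that case is immediate.

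Next I would identify the prefactor. Combining $D_C^\perp = \langle \hrow{1},\ldots,\hrow{m}\rangle$ with \eqref{eq:cardinality_Ddual_frobenius} gives
$$
|D_C| \;=\; \frac{|R|^n}{|D_C^\perp|} \;=\; \frac{|R|^n}{|\langle \hrow{1},\ldots,\hrow{m}\rangle|},
$$
which matches the prefactor in the theorem.

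Finally I would handle the character sum. Assume $\boldsymbol{x}\in \langle \hrow{1},\ldots,\hrow{m}\rangle$, and write $\boldsymbol{x} = \sum_{i=1}^m r_i \hrow{i}$ for some $r_i\in R$. By the construction of $(H|S)$ in Theorem~\ref{theorem:code to parity check system}, $s_{i,j} = \hrow{i}\cdot \boldsymbol{d}_j$, so bilinearity of the dot product gives
$$
\boldsymbol{x}\cdot \boldsymbol{d}_j \;=\; \sum_{i=1}^m r_i(\hrow{i}\cdot \boldsymbol{d}_j) \;=\; \sum_{i=1}^m r_i\, s_{i,j} \;=\; \srowelem{x}(j),
$$
where the last equality is the definition \eqref{eq:def_Sx} (well-defined by condition (iii) of \eqref{def:parity check system}). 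Applying \eqref{eq:char_value} to $-\boldsymbol{d}_j$ then gives
$$
\echarargminusindex{x}{d}{j} \;=\; \varepsilon(\boldsymbol{x}\cdot(-\boldsymbol{d}_j)) \;=\; \varepsilon(-\srowelem{x}(j)).
$$
Substituting this and the value of $|D_C|$ into the formula of Theorem~\ref{theorem:Fourier code} produces precisely the formula in Theorem~\ref{theorem:Fourier parity check system}, completing the proof.
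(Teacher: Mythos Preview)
Your proposal is correct and follows essentially the same approach as the paper: reduce to Theorem~\ref{theorem:Fourier code}, identify $D_C^\perp = \langle \hrow{1},\ldots,\hrow{m}\rangle$ via \eqref{eq:equality_D_Ddualdual}, compute $|D_C|$ via \eqref{eq:cardinality_Ddual_frobenius}, and use \eqref{eq:char_value} together with $s_{i,j} = \hrow{i}\cdot\boldsymbol{d}_j$ to rewrite the character values as $\varepsilon(-\srowelem{x}(j))$. The only cosmetic difference is that the paper cites Theorem~\ref{theorem:parity check system to code} (rather than Theorem~\ref{theorem:code to parity check system}) for the identity $s_{i,j} = \hrow{i}\cdot\boldsymbol{d}_j$, which is the more natural reference since you are starting from a given parity check system; the identity follows from $H\boldsymbol{d}_j^T = \scol{j}$ in part (ii) there.
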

\begin{proof}
By Theorem \ref{theorem:parity check system to code}, there is a coset decomposition $\bigcup_{j = 1}^s (\boldsymbol{d}_j + D_C)$ of $C$ such that $D_C = <\hrow{1},\ldots,\hrow{m}>^\perp$ and $s_{i,j} = \hrow{i} \cdot \boldsymbol{d}_j$. Hence, for $\boldsymbol{x} = \sum_{i = 1}^m r_i \hrow{i}$,
$$
\srowelem{x} = (\sum_{i = 1}^m r_i (\hrow{i} \cdot \boldsymbol{d}_1), \ldots, \sum_{i = 1}^m r_i (\hrow{i} \cdot \boldsymbol{d}_s)) = (\boldsymbol{x} \cdot \boldsymbol{d}_1, \ldots, \boldsymbol{x} \cdot \boldsymbol{d}_s).
$$
From \eqref{eq:char_value} we have that
$$
\echarargindex{x}{d}{j} =  \varepsilon(\srowelem{x}(j)).
$$
By  \eqref{eq:equality_D_Ddualdual} and\eqref{eq:cardinality_Ddual_frobenius}, we deduce that
$$
| D_C | = \frac{| R |^n}{|<\hrow{1},\ldots,\hrow{m}>|}.
$$
The theorem now follows from Theorem \ref{theorem:Fourier code}.
\end{proof}

\begin{example} \label{ex:fourier_coeff}
Let $C$ be the code associated to the parity check system
$$
(H|S) = 
\left (
\begin{array}{cccc|ccc}
1&1&3&5&0&1&5\\
0&4&2&2&0&2&4
\end{array}
\right )
$$
over the ring $R = \mathbb{Z}_6$. Choose $\varepsilon(x) = e^{\frac{2 \pi i}{6} x}$ as a generating character for $\widehat{\mathbb{Z}_6}$. Observe that
$$
\frac{|R|^n}{|<\hrow{1}, \ldots, \hrow{m} >|} = \frac{|\mathbb{Z}_6|^4}{|<(1,1,3,5),(0,4,2,2)>|} = \frac{6^4}{18}= 72.
$$
By Theorem \ref{theorem:Fourier parity check system} we obtain the following list of values for $\hat{\delta}_{C}(\boldsymbol{x})$.
$$
\begin{array}{|c|c|c|c|}
\hline
\boldsymbol{x} \in <\hrow{1}, \hrow{2}> & \srowelem{x} & \sum_{j=1}^3 \varepsilon(-\srowelem{x}(j)) & \hat{\delta}_C(\boldsymbol{x})\\
\hline
\hline
(0,0,0,0) & (0,0,0) & 3 & 216\\
\hline
(1,1,3,5) & (0,1,5) & 2 & 144\\
\hline
(2,2,0,4) & (0,2,4) & 0 & 0\\
\hline
(3,3,3,3) & (0,3,3) & -1 & -72\\
\hline
(4,4,0,2) & (0,4,2) & 0 & 0\\
\hline
(5,5,3,1) & (0,5,1) & 2 & 144\\
\hline
(0,4,2,2) & (0,2,4) & 0 & 0\\
\hline
(1,5,5,1) & (0,3,3) & -1 & -72\\
\hline
(2,0,2,0) & (0,4,2) & 0 & 0\\
\hline
(3,1,5,5) & (0,5,1) & 2 & 144\\
\hline
(4,2,2,4) & (0,0,0) & 3 & 216\\
\hline
(5,3,5,3) & (0,1,5) & 2 & 144\\
\hline
(0,2,4,4) & (0,4,2) & 0 & 0\\
\hline
(1,3,1,3) & (0,5,1) & 2 & 144\\
\hline
(2,4,4,2) & (0,0,0) & 3 & 216\\
\hline
(3,5,1,1) & (0,1,5) & 2 & 144\\
\hline
(4,0,4,0) & (0,2,4) & 0 & 0\\
\hline
(5,1,1,5) & (0,3,3) & -1 & -72\\
\hline
\end{array}
$$
\end{example}

\subsection{An identity for the distance} \label{subsec: identity distance}

Let 
$$
X^2 = \{ (\boldsymbol{x_1}, \boldsymbol{x_2}) : \boldsymbol{x_1}, \boldsymbol{x_2} \in X\}
$$ 
for any subset $X \subseteq R^n$. If $D$ is a submodule of $R^n$, then we immediately have that $D^2$ is a submodule of $(R^n)^2 = R^{2n}$. By inspection, 
$$
\bigcup_{(k,l) \in [s] \times [s]} \left( (\boldsymbol{d_k},\boldsymbol{d_l}) + D_C^2 \right)
$$
is a coset decomposition of $C^2$ for a code $\decomp$. The dot product of $(R^n)^2$ is defined by 
$$
(\boldsymbol{x_1}, \boldsymbol{x_2}) \cdot (\boldsymbol{x'_1}, \boldsymbol{x'_2}) = \boldsymbol{x_1} \cdot \boldsymbol{x'_1} +  \boldsymbol{x_2} \cdot \boldsymbol{x'_2}
$$
for $(\boldsymbol{x_1}, \boldsymbol{x_2}), (\boldsymbol{x'_1}, \boldsymbol{x'_2}) \in (R^n)^2$, and we observe that $(D_C^2)^\perp = (D_C^\perp)^2$. By (\ref{eq:direct sum of character}) and some straightforward calculations, we note that  Theorem \ref{theorem:Fourier code} implies that 
$$
\hat{\delta}_{C^2}(\boldsymbol{x_1},\boldsymbol{x_2}) = \hat{\delta}_C(\boldsymbol{x_1}) \hat{\delta}_C(\boldsymbol{x_2}), 
$$
for $(\boldsymbol{x_1},\boldsymbol{x_2}) \in (R^n)^2$.

For $i \in \{0,1,\ldots,n\}$, let
$$
\mathcal{D}_i(C) = \sum_{(\boldsymbol{c},\boldsymbol{c'}) \in C^2 \hbox{, } d(\boldsymbol{c},\boldsymbol{c'}) = i} \delta_{C^2}(\boldsymbol{c}, \boldsymbol{c'}),
$$
that is, $\mathcal{D}_i(C)$ is the number of ordered pairs of codewords in $C$ such that the distance between the codewords in each pair is equal to $i$.

Define $\mathcal{D}(C;x,y)$ to be the following distance enumerator in the variables $x$ and $y$:
$$
\mathcal{D}(C;x,y) = \sum_{i = 0}^n \mathcal{D}_i(C) x^{n-i}y^i.
$$
For a parity check system $(H|S)$, define the polynomial $\mathcal{N}((H|S);x,y)$ by
$$
\mathcal{N}((H|S);x,y) = 
\sum_{\boldsymbol{h} \in <\hrow{1}, \ldots,\hrow{m}>} 
\left (
|\hat{\delta}_{C}(\boldsymbol{h})|^2 x^{n - \wt{h}} y^{\wt{h}}
\right ) .
$$


\begin{theorem} \label{theorem:distance identity}
Let $(H|S)$ be a parity check system of a code $C$. Then
$$
\mathcal{D}(C;x,y) 
\quad = \quad 
\frac{1}{| R|^{n}} \mathcal{N}((H|S);\ x + (| R | - 1)y , x - y ).
$$ 
\end{theorem}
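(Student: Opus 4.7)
The plan is to apply the Poisson summation formula (Theorem \ref{theorem:poisson}) to the code $C^2 \subseteq R^{2n}$, which has coset decomposition $\bigcup_{(k,l)} \bigl((\boldsymbol{d}_k,\boldsymbol{d}_l) + D_C^2\bigr)$ and partial kernel $D_C^2$ with $(D_C^2)^\perp = (D_C^\perp)^2$. The target is to interpret $\mathcal{D}(C;x,y)$ as $\sum_{(\boldsymbol{c},\boldsymbol{c}') \in C^2} f(\boldsymbol{c},\boldsymbol{c}')$ for the function $f : R^{2n} \to \mathbb{C}[x,y]$ given by $f(\boldsymbol{x}_1,\boldsymbol{x}_2) = x^{n - d(\boldsymbol{x}_1,\boldsymbol{x}_2)} y^{d(\boldsymbol{x}_1,\boldsymbol{x}_2)}$, and then feed this into Poisson.

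The main calculation is the Fourier transform of $f$. Since $f(\boldsymbol{x}_1,\boldsymbol{x}_2) = \prod_{i=1}^n g(x_{1,i},x_{2,i})$ where $g(a,b) = x$ if $a=b$ and $g(a,b) = y$ otherwise, and since characters of $R^{2n}$ factor coordinatewise, I get $\hat{f}(\boldsymbol{y}_1,\boldsymbol{y}_2) = \prod_{i=1}^n \hat{g}(y_{1,i}, y_{2,i})$. A direct computation using the substitution $b = a+c$ and orthogonality $\sum_{c \in R} \varepsilon(rc) = |R|\cdot [r=0]$ (available since $\varepsilon$ generates $\widehat{R}$ and $\phi : R \to \widehat{R}$ is an isomorphism) yields $\hat{g}(y_1,y_2) = 0$ unless $y_1 + y_2 = 0$, in which case $\hat{g}(y_1,y_2) = |R|(x + (|R|-1)y)$ if $y_2 = 0$ and $\hat{g}(y_1,y_2) = |R|(x-y)$ if $y_2 \neq 0$. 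Taking the product over coordinates, $\hat{f}(\boldsymbol{y}_1,\boldsymbol{y}_2) = 0$ unless $\boldsymbol{y}_2 = -\boldsymbol{y}_1$, and setting $\boldsymbol{h} = \boldsymbol{y}_1$,
\[
\hat{f}(\boldsymbol{h}, -\boldsymbol{h}) \;=\; |R|^n \bigl(x + (|R|-1)y\bigr)^{n - \wt{h}} (x-y)^{\wt{h}}.
\]

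Next I handle the character-dual factor. Using $\hat{\delta}_{C^2}(\boldsymbol{x}_1,\boldsymbol{x}_2) = \hat{\delta}_C(\boldsymbol{x}_1)\hat{\delta}_C(\boldsymbol{x}_2)$ (stated before the theorem) applied to $-C$, together with the direct verification that $\hat{\delta}_{-C}(-\boldsymbol{h}) = \sum_{\boldsymbol{c} \in C} \varepsilon(-\boldsymbol{h}\cdot\boldsymbol{c}) = \hat{\delta}_C(\boldsymbol{h})$ and $\hat{\delta}_{-C}(\boldsymbol{h}) = \overline{\hat{\delta}_C(\boldsymbol{h})}$ (via $\overline{\chi(x)} = \chi(-x)$ from \eqref{eq:minus_char}), I obtain
\[
\hat{\delta}_{-C^2}(\boldsymbol{h},-\boldsymbol{h}) \;=\; \hat{\delta}_{-C}(\boldsymbol{h}) \hat{\delta}_{-C}(-\boldsymbol{h}) \;=\; \overline{\hat{\delta}_C(\boldsymbol{h})}\, \hat{\delta}_C(\boldsymbol{h}) \;=\; |\hat{\delta}_C(\boldsymbol{h})|^2.
\]
Plugging the expressions for $\hat{f}$ and $\hat{\delta}_{-C^2}$ into Theorem \ref{theorem:poisson} for $C^2$, the sum over $(D_C^\perp)^2$ collapses to a sum over $\boldsymbol{h} \in D_C^\perp$, the factor $|R|^n$ from $\hat{f}$ cancels one $|R|^n$ in the denominator, and $D_C^\perp = \langle \hrow{1},\ldots,\hrow{m} \rangle$ by \eqref{eq:equality_D_Ddualdual} and condition (i) of Theorem \ref{theorem:parity check system to code}. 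This produces exactly $\tfrac{1}{|R|^n} \mathcal{N}((H|S);\, x+(|R|-1)y,\, x-y)$.

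The main obstacle is keeping the Fourier computation of $\hat{g}$ and the signs in $\hat{\delta}_{-C}(\pm\boldsymbol{h})$ straight; once $\hat{f}$ is shown to be supported on the antidiagonal $\boldsymbol{y}_2 = -\boldsymbol{y}_1$ with the right product formula, the rest is a clean substitution.
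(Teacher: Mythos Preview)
Your proposal is correct and follows essentially the same route as the paper: apply Theorem~\ref{theorem:poisson} to $C^2$ with $f(\boldsymbol{z},\boldsymbol{z}')=x^{n-d(\boldsymbol{z},\boldsymbol{z}')}y^{d(\boldsymbol{z},\boldsymbol{z}')}$, factor $\hat f$ coordinatewise to see it is supported on the antidiagonal $\boldsymbol{h}'=-\boldsymbol{h}$ with value $|R|^n(x+(|R|-1)y)^{n-\wt{h}}(x-y)^{\wt{h}}$, and then identify $\hat\delta_{(-C)^2}(\boldsymbol{h},-\boldsymbol{h})=|\hat\delta_C(\boldsymbol{h})|^2$ via \eqref{eq:minus_char}. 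The only cosmetic differences are that you compute $\hat g$ by the substitution $b=a+c$ whereas the paper evaluates each factor $t_i$ directly, and you take $f$ with values in $\mathbb{C}[x,y]$ while the paper fixes $x,y\in\mathbb{C}$ first; neither affects the argument.
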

\begin{proof}
By Theorem \ref{theorem:parity check system to code} there is a coset decomposition of the code, $\decomp, $ associated with $(H|S)$ where $D_C^\perp = <\hrow{1},\ldots,\hrow{m}>$.
For any $x,y \in \mathbb{C}$, if we apply Theorem \ref{theorem:poisson} to the function 
$$
f(\boldsymbol{z},\boldsymbol{z'}) = x^{n - d(\boldsymbol{z},\boldsymbol{z'})} y^{d(\boldsymbol{z},\boldsymbol{z'})}  \hbox{ where } (\boldsymbol{z},\boldsymbol{z'}) \in (R^n)^2, 
$$
then we get that  
\begin{equation} \label{eq:distance formula}
\mathcal{D}(C;x,y) 
= \sum_{(\boldsymbol{c},\boldsymbol{c'}) \in C^2} f(\boldsymbol{c},\boldsymbol{c'}) 
= \frac{1}{| R |^{2n}} \sum_{(\boldsymbol{h}, \boldsymbol{h}') \in (<\hrow{1},\ldots,\hrow{m}>)^2} \widehat{f}(\boldsymbol{h},\boldsymbol{h'}) \widehat{\delta}_{(-C)^2}(\boldsymbol{h},\boldsymbol{h'}).
\end{equation}
By using the discrete Fourier transform (\ref{eq:fourier transform M}) we obtain the following formula for any $(\boldsymbol{h},\boldsymbol{h'}) = ((h_1,\ldots,h_n),(h_1',\ldots,h_n'))\in (<\hrow{1},\ldots,\hrow{m}>)^2$, 
\begin{align*}
\hat{f}(\boldsymbol{h},\boldsymbol{h'}) & = 
\sum_{(\boldsymbol{z},\boldsymbol{z'}) \in (R^n)^2} f(\boldsymbol{z},\boldsymbol{z'}) \echartwo{h}{h'}((-\boldsymbol{z},-\boldsymbol{z'}))\\
&  = \sum_{(\boldsymbol{z},\boldsymbol{z'}) = ((z_1,\ldots,z_n),(z_1',\ldots,z_n')) \in (R^n)^2} f(\boldsymbol{z},\boldsymbol{z'}) \prod_{i = 1}^n \varepsilon^{h_i}(-z_i) \varepsilon^{h'_i}(-z'_i) = \prod_{i = 1}^n t_i,
\end{align*}
where
$$
t_i = x \sum_{z \in R} \varepsilon^{h_i}(-z) \varepsilon^{h'_i}(-z) 
    + y \sum_{\substack{(z,z') \in R^2,\\z \neq z'}} \varepsilon^{h_i}(-z) \varepsilon^{h'_i}(-z').
$$
Since $\varepsilon^{h_i}(-z) \varepsilon^{h'_i}(-z) = \varepsilon^{h_i + h'_i}(-z)$ and $R^{\lozenge} = \{\varepsilon^0\}$, the orthogonality property (\ref{eq:Fourier orthogonality}) implies that
$$
\sum_{z \in R} \varepsilon^{h_i}(-z) \varepsilon^{h'_i}(-z) 
=
\sum_{z \in R} \varepsilon^{h_i + h'_i}(-z)
= 
\left \{
\begin{array}{ccl}
| R | & \hbox{if} & h_i' = - h_i, \\
0 & \hbox{if} & h_i' \neq - h_i. 
\end{array}
\right .
$$
Now, using that     
$$
\sum_{\substack{(z,z') \in R^2,\\z \neq z'}} \varepsilon^{h_i}(-z) \varepsilon^{h'_i}(-z')
= 
\sum_{\substack{(z,z') \in R^2}} \varepsilon^{h_i}(-z) \varepsilon^{h'_i}(-z')
-
\sum_{z \in R} \varepsilon^{h_i}(-z) \varepsilon^{h'_i}(-z) 
$$
and $(R^2)^\lozenge = \{(\varepsilon^0,\varepsilon^0)\}$ in (\ref{eq:Fourier orthogonality}), we deduce that
$$
\sum_{\substack{(z,z') \in R^2,\\z \neq z'}} \varepsilon^{h_i}(-z) \varepsilon^{h'_i}(-z') = 
\left \{
\begin{array}{ccl}
| R |^2 - | R |  & \hbox{if} & h_i' = - h_i = 0, \\
- | R |       & \hbox{if} & h_i' = - h_i \neq 0, \\
0           & \hbox{if} & h_i' \neq - h_i.
\end{array}
\right .
$$
Consequently,
$$
t_i = 
\left \{
\begin{array}{lcl}
| R | x + (| R |^2 - | R |)y & \hbox{if} & h_i' = -h_i = 0,\\
| R | x- | R | y               & \hbox{if} & h_i' = -h_i \neq 0,\\
0                    & \hbox{if} & h_i' \neq -h_i,\\
\end{array}
\right .
$$
which implies that $\hat{f}(\boldsymbol{h},\boldsymbol{h'}) = 0$ if $\boldsymbol{h'} \neq -\boldsymbol{h}$ and that
\begin{equation} \label{eq:formula_f}
\begin{array}{lcl}
\hat{f}(\boldsymbol{h},-\boldsymbol{h}) & = & (| R | x + (| R |^2 - | R |)y)^{n - \wt{h}} (| R | x - | R | y)^{\wt{h}}\\
                                        & = & | R |^n(x + (| R | - 1)y)^{n - \wt{h}} (x - y)^{\wt{h}}.
\end{array}
\end{equation}
By using that $-C$ has the coset decomposition $\bigcup_{j = 1}^s (-\boldsymbol{d}_j + D_C)$,  Theorem \ref{theorem:Fourier code} and \eqref{eq:minus_char}, we obtain that $\delta_{C}(\boldsymbol{h}) = \overline{\delta_{-C}(\boldsymbol{h})}$. Hence, again by the use of \eqref{eq:minus_char}, $\hat{\delta}_{(-C)^2}(\boldsymbol{h}, -\boldsymbol{h}) = \hat{\delta}_{-C}(\boldsymbol{h}) \hat{\delta}_{-C}(-\boldsymbol{h}) = \overline{\hat{\delta}_{C}(\boldsymbol{h})} \hat{\delta}_{C}(\boldsymbol{h}) = | \hat{\delta}_{C}(\boldsymbol{h})|^2$. By using this property and formula \eqref{eq:formula_f} in (\ref{eq:distance formula}), we obtain that
$$
\mathcal{D}(C;x,y) 
\quad = \quad 
\frac{1}{| R |^{n}} \mathcal{N}((H|S); x + (| R | - 1)y, x - y).
$$ 
The fact that the above formula holds for any $x,y \in \mathbb{C}$ concludes the proof.
\end{proof}

Let $(H|S)$ be a parity check system of a code $C$. For convenience, let $<H>$ denote the module $<\hrow{1},\ldots,\hrow{m}>$. We remark that by using Theorem \ref{theorem:Fourier parity check system} in Theorem \ref{theorem:distance identity} and use the fact that $|\sum_{j = 1}^s \varepsilon(-\srowelem{h}(j))| = |\sum_{j = 1}^s \overline{\varepsilon(\srowelem{h}(j))}| = |\sum_{j = 1}^s \varepsilon(\srowelem{h}(j))|$ we obtain that
$$
D(C;x,y) = \frac{|R|^n}{|<H>|^2} \sum_{\boldsymbol{h} \in <H>} \left ( |\sum_{j = 1}^s \varepsilon(\srowelem{h}(j))|^2 (x + (|R|-1)y)^{n - \wt{h}} (x-y)^{\wt{h}}\right).
$$

\begin{example} 
Let $C$ be the code associated to the parity check system
$$
(H|S) = 
\left (
\begin{array}{cccc|ccc}
1&1&3&5&0&1&5\\
0&4&2&2&0&2&4
\end{array}
\right )
$$
over $\mathbb{Z}_6$. By the table given in Example \ref{ex:fourier_coeff}, we obtain the following table.  
$$
\begin{array}{|c|c|c|}
\hline
w & l & |\{\boldsymbol{h} \in <\hrow{1}, \hrow{2}> : \wt{h} = w \hbox{, } \widehat{\delta_{C}}(\boldsymbol{h}) = l \}|\\
\hline
\hline
0 & 216 & 1\\
\hline
2 & 0 & 2\\
\hline
3 & 0 & 4\\
\hline
4 & -72 & 3\\
\hline
4 & 144 & 6\\
\hline
4 & 216 & 2\\
\hline
\end{array}
$$
(In the table above, there are no elements $\boldsymbol{h} \in <\hrow{1}, \hrow{2}>$ with $(\wt{h}, \delta_C(\boldsymbol{h}) = (w,l)$ for $(w,l)$-pairs not in the table.) By the use of Theorem \ref{theorem:distance identity} and the table above, we obtain the following distance polynomial enumerator $\mathcal{D}(C;x,y)$ of the code $C$,
$$
\begin{array}{rcl}
\mathcal{D}(C;x,y) &=& \frac{1}{6^4} \mathcal{N}((H|S);\ x + 5y , x - y )\\
                   &=& \frac{1}{6^4} \sum_{\boldsymbol{h} \in <\hrow{1}\hrow{2}>} \left (|\hat{\delta}_{C}(\boldsymbol{h})|^2 (x+5y)^{4 - \wt{h}} (x-y)^{\wt{h}} \right )\\
									 &=& \frac{1}{6^4} \left (216^2(x+5y)^4 + 3 \cdot 72^2 (x-y)^4 + 6 \cdot 144^2 (x-y)^4 + 2 \cdot 216^2 (x-y)^4 \right )\\
									 &=& 216 (x^4 + 30 x^2y^2 + 80 x y^3 + 105 y^4)
\end{array}
$$ 
Note that $|C| = 216$ since $\frac{|R|^n}{|<\hrow{1},\ldots, \hrow{m}>|} \cdot s = \frac{6^4}{18} \cdot 3 = 216$.
\end{example}

Any linear code over a finite commutative Frobenius ring can be represented by a parity check system $(H|\boldsymbol{0}^T)$. Hence, we remark that any parity check matrix $H'$ for a linear code over a finite commutative Frobenius ring as given in \cite{greferath04} corresponds to a parity check system $(H|\boldsymbol{0}^T)$ where $H'_{\mathrm{row}(i)} = \varphi(\hrow{i})$ and $\varphi$ is the isomorphism given in \eqref{eq:module_iso}. 

Moreover, Theorem \ref{theorem:distance identity} corresponds to the the MacWilliams identity given in \cite{greferath04} for linear codes over finite commutative Frobenius rings as follows. For any linear code $C'$, let $\mathcal{W}(C';x,y) = \sum_{i = 0}^n A_i x^{n-i}y^i$ where $A_i = |\{\boldsymbol{c'} \in C' : \wt{c'} = i\}|$. Let $(H|\boldsymbol{0}^T)$ be a parity check system of a linear code $C$. By Theorem \ref{theorem:Fourier parity check system} and since  $<\hrow{1},\ldots,\hrow{m}> = C^\perp$ it follows that
$$
|\hat{\delta}(\boldsymbol{h})|^2 = |\frac{|R|^n}{|C^\perp|} \varepsilon(0)|^2 = |C|^2
$$ 
for $\boldsymbol{h} \in C^\perp$. Consequently, by Theorem \ref{theorem:distance identity},
$$
\begin{array}{rcl}
\mathcal{W}(C;x,y) & = & \frac{\mathcal{D}(C;x,y)}{|C|}\\
                   & = & \frac{1}{|C| \cdot | R | ^{n}} \mathcal{N}((H | \boldsymbol{0}^T);\ x + (| R | - 1)y , x - y )\\
                   & = & \frac{1}{|C| \cdot | R | ^{n}} \sum_{\boldsymbol{h} \in C^\perp} \left ( |C|^2 (x + (| R | - 1)y)^{n - \wt{h}} (x-y)^{\wt{h}} \right )\\
									 & = & \frac{1}{|C^\perp|} \mathcal{W}(C^\perp; (x + (| R | - 1)y, x-y).
									 
\end{array}
$$



\end{document}